\theoremstyle{plain}
\newtheorem{theorem}{Theorem}
\newtheorem{proposition}[theorem]{Proposition}
\newtheorem{lemma}[theorem]{Lemma}
\newtheorem{assumption}[theorem]{Assumption}
\newtheorem{definition}[theorem]{Definition}
\font\SYM=msbm10
\newcommand{\Complex}{\mbox{\SYM C}}
\begin{document}

\title{Constructing ``non-Kerrness'' on compact domains}

\author{Thomas B\"ackdahl}
\email{thomas.backdahl@aei.mpg.de}
\affiliation{ 
Max Planck Institut f\"ur Gravitationsphysik,  Albert Einstein Institut,
Am M\"uhlenberg 1, 14476 Golm, Germany
}
\author{Juan A. Valiente Kroon}
\email{j.a.valiente-kroon@qmul.ac.uk}
\affiliation{ 
School of Mathematical Sciences,  Queen Mary, University of London, 
Mile End Road, London E1 4NS, United Kingdom
}

\date{\today}

\begin{abstract}
Given a compact domain of a 3-dimensional hypersurface on a vacuum
spacetime, a scalar (the ``non-Kerrness'') is constructed by solving a
Dirichlet problem for a second order elliptic system. If such scalar
vanishes, and a set of conditions are satisfied at a point, 
then the domain of dependence of the compact domain is
locally isometric to a portion of a member of the Kerr family of solutions to
the Einstein field equations. This construction is expected to be of
relevance in the analysis of numerical simulations of black hole
spacetimes.
\end{abstract}

\pacs{04.20.Ex, 04.20.Jb, 04.25.dg}
\keywords{Kerr spacetime, invariant characterisations, initial data sets}

\maketitle

\section{Introduction}
The present article is concerned with the problem of measuring how
different a given initial data set for the Einstein vacuum field
equations is from a Kerr initial data set. In
\cite{BaeVal10a,BaeVal10b,BaeVal11b} this problem has been addressed
by the construction of a geometric invariant ---the
\emph{non-Kerrness}--- on hypersurfaces with at least one asymptotic
end. This setting, although convenient for theoretical discussions, is
not ideal for numerical considerations where very often one needs to
make use of bounded computational domains on an hypersurface. The
purpose of this article is to provide a construction of non-Kerrness
on bounded domains.

\medskip
The construction of the non-Kerrness given in
\cite{BaeVal10a,BaeVal10b,BaeVal11b} is based on a very strong
property of the Kerr spacetime: the existence of a \emph{Killing-Yano
tensor}. A Killing-Yano tensor is an antisymmetric, rank 2 tensor
$Y_{\mu\nu}$ satisfying the equation
\begin{equation*}
\nabla_{(\mu} Y_{\nu)\lambda}=0.
\end{equation*}
Let $\zeta_{\mu}\equiv \epsilon_\mu{}^{\nu\lambda\rho} \nabla_\nu
Y_{\lambda \rho}$ denote the codifferential of $Y_{\mu\nu}$. If
$Y_{\mu\nu}$ is a Killing-Yano tensor, then $\zeta_\mu$ satisfies the
Killing vector equation. As discussed in \cite{PenRin86}, the theory
of Killing-Yano tensors can be conveniently reformulated in terms of
the existence of a valence 2 Killing spinor,
$\kappa_{AB}=\kappa_{(AB)}$, satisfying the equation
\begin{equation}
\nabla_{A'(A} \kappa_{BC)}=0.
\label{KillingSpinorEquation}
\end{equation}
The spinorial analogue of the codifferential $\zeta_\mu$ is the spinor
$\xi_{AA'}\equiv \nabla_{A'}{}^B\kappa_{AB}$. In general, if
$\kappa_{AB}$ satisfies the Killing spinor equation, then $\xi_{AA'}$
is a complex Killing vector. In the case of the Kerr spacetime the
real and imaginary parts of this vector are proportional ---and 
by multiplying with a complex constant, the imaginary part can be set to zero. 
In general, the existence of a Killing-Yano tensor is equivalent to existence of 
a Killing spinor $\kappa_{AB}$ such that $\xi_{AA'}$ is real.

\medskip
Killing spinors (or alternatively, Killing-Yano tensors) are useful in
the characterisation of the Kerr spacetime as the existence of one of
these objects severely restricts the algebraic type of the curvature
of the spacetime. Furthermore, the implied existence of a real Killing
vector allows to make contact with the theory of the Mars-Simon tensor
---see \cite{Mar99,Mar00}. As a result of this analysis, it is
possible to provide a purely local
characterisation of the Kerr spacetime ---see Theorem~1 in \cite{Mar00}. Alternatively, one can obtain a somewhat simpler characterisation
if one combines local and global requirements: the existence of a
stationary, asymptotically flat region with non-vanishing
mass ---see Theorem~2
in \cite{Mar00}. Precisely this result was used in the constructions of
non-Kerrness on non-bounded 3-manifolds described in
\cite{BaeVal10a,BaeVal10b,BaeVal11b}.

\medskip
The construction of the non-Kerrness on bounded domains 
discussed in the present article makes use of the local spacetime
characterisation of the Kerr spacetime given in Theorem~1 of
\cite{Mar00} to show that if the non-Kerrness vanishes on some
3-dimensional bounded domain, then the initial data prescribed on that
region is locally isometric to data for a Kerr spacetime. We expect
that this result will be of utility to assess in a quantitative way
how a given numerically constructed dynamical black hole spacetime
evolves towards a stationary state described by the Kerr spacetime. In
the process, it will be shown that the general theory of Killing
spinor initial data sets used in \cite{BaeVal10a,BaeVal10b,BaeVal11b}
can be simplified.

\subsection*{Overview of the article}

The content of this article is structured as follows: Section
\ref{Section:KillingSpinors} provides a summary of key properties of
spacetimes with Killing spinors. It also contains a reformulation in
terms of spinors of a local characterisation of the Kerr spacetime by
M. Mars. Finally, a brief discussion of the notion of Killing spinor
candidates is provided. Section~\ref{Section:KillingSpinorInitialData}
provides a brief summary of the theory of the Killing spinor initial
data equations which encode the existence of a Killing vector at the
level of initial data. Section~\ref{Section:ApproximateKillingSpinors}
gives a brief discussion of the notion of approximate Killing spinors,
the approximate Killing spinor equations and the elliptic theory
required to discuss the existence of solutions to this equation with
Dirichlet boundary conditions. Section~\ref{Section:KillingVectors}
provides a result regarding the realness of the Killing vector
constructed from the Killing spinor, which will be required in our
subsequent discussion.  Section~\ref{Section:MainResult} provides our
main result: a theorem which characterises Kerr initial data on a
compact domain of a 3-dimensional manifold using the notion of
approximate Killing spinors. Finally Section~\ref{Section:Conclusions}
provides some concluding remarks. There is an appendix
(Appendix~\ref{RedundancySecondAlgebraicCondition}) providing a proof
of a theorem discussed in
Section~\ref{Section:KillingSpinorInitialData}, which tells that one
of the Killing spinor initial data equations can be omitted.

\subsection*{Notation and conventions}
All throughout, $(\mathcal{M},g_{\mu\nu})$ will denote a smooth, orientable and
time orientable globally hyperbolic vacuum spacetime. Here, and in
what follows, $\mu,\,\nu,\cdots$ denote abstract 4-dimensional tensor
indices. The metric $g_{\mu\nu}$ will be taken to have signature
$(+,-,-,-)$. Let $\nabla_\mu$ denote the Levi-Civita connection of
$g_{\mu\nu}$. The sign of the Riemann tensor will be given by the
equation
\[
\nabla_\mu\nabla_\nu\xi_\zeta-\nabla_\nu\nabla_\mu\xi_\zeta=R_{\nu\mu\zeta}{}^\eta\xi_\eta.
\]
Spinors will be used systematically. We follow the conventions of
\cite{PenRin84}.  In particular, $A,\,B,\ldots$ will denote abstract
spinorial indices. Tensors and their spinorial
counterparts are related by means of the solder form $\sigma_\mu{}^{AA'}$ satisfying $g_{\mu\nu}=\sigma_\mu^{AA'}\sigma_\nu^{BB'} \epsilon_{AB}\bar\epsilon_{A'B'}$, where $\epsilon_{AB}$
is the antisymmetric spinor and $\bar\epsilon_{A'B'}$ its complex
conjugate copy. One has, for example, that $\xi_\mu =
\sigma_{\mu}{}^{AA'} \xi_{AA'}$.  Let $\nabla_{AA'}$ denote the
spinorial counterpart of the spacetime connection
$\nabla_\mu$.

\section{A local spacetime characterisation of the Kerr spacetime}
\label{Section:KillingSpinors}

Given a spacetime $(\mathcal{M},g_{\nu\nu})$, let
$C_{\mu\nu\lambda\rho}$ denote the Weyl tensor of the metric
$g_{\mu\nu}$. Let $C_{AA'BB'CC'DD'}$ denote the spinorial counterpart of
$C_{\mu\nu\lambda\rho}$. There exists a completely symmetric spinor $\Psi_{ABCD}$ such that:
\begin{align*}
C_{AA'BB'CC'DD'} ={}& \Psi_{ABCD} \bar{\epsilon}_{A'B'}\bar{\epsilon}_{C'D'} \\
&+ \bar{\Psi}_{A'B'C'D'} \epsilon_{AB}\epsilon_{CD}.
\end{align*}
In terms of the spinor $\Psi_{ABCD}$, the Bianchi identity can be
rewritten as
\begin{equation}
\nabla^Q{}_{A'}\Psi_{ABCQ}=0.
\label{SpinorialBianchiIdentity}
\end{equation}

\medskip
We recall that the two classical invariants of the Weyl tensor are
given by:
\begin{align*}
\mathcal{I} &\equiv \tfrac{1}{2} \Psi_{ABCD}\Psi^{ABCD}, \\
\mathcal{J} &\equiv \tfrac{1}{6} \Psi_{ABCD} \Psi^{CDEF}\Psi_{EF}{}^{AB}.
\end{align*}

\subsection{Properties of spacetimes with Killing spinors}
\label{Subsection:PropertiesKillingSpinors}
In what follows it is assumed one has a region $\mathcal{N}$ of the spacetime
$(\mathcal{M},g_{\mu\nu})$ where one has a solution $\kappa_{AB}$ of
the Killing spinor equation \eqref{KillingSpinorEquation}. It is then
well known that the spacetime must be of Petrov type D, N or O at every point where the Killing spinor exists ---see
e.g. \cite{WalPen70}. In the sequel we will concentrate our
attention to the case when $(\mathcal{M},g_{\mu\nu})$ is of Petrov
type D. In such case, there exist spinors $\alpha_A$,  $\beta_A$,
$\alpha_Q \beta^Q=1$, such
that
\begin{equation}
\label{TypeD:Psi}
\Psi_{ABCD} =-\psi \alpha_{(A} \alpha_A \beta_C \beta_{D)},
\end{equation}
where
\begin{equation}
\psi \equiv 18 \mathcal{J}/\mathcal{I}. 
\label{psi}
\end{equation}
The sign convention used in this equation differs from the one used in
\cite{BaeVal10a,BaeVal10b,BaeVal11b}. The reason behind this choice is
to avoid potential problems with the choice of branch of roots of
complex quantities. The valence 2 Killing spinor is then given by
\begin{equation}
\label{TypeD:kappa}
\kappa_{AB} = \psi^{-1/3} \alpha_{(A}\beta_{B)},
\end{equation}
where the branch with minimal absolute value of the complex argument
is used. The conventions used gives a real and positive $\psi$ for the
Schwarzschild spacetime. 

\medskip
As in the
introduction, let 
\[
\xi_{AA'}\equiv \nabla^Q{}_{A'} \kappa_{AQ}.
\]
Then $\xi_{AA'}$ is (in general) a complex solution to Killing equation
\[
\nabla_{AA'}\xi_{BB'} + \nabla_{BB'} \xi_{AA'}=0. 
\]
If $\xi_{AA'}$ is real, we define the Killing form of $\xi_{AA'}$ by 
\begin{align*}
F_{AA'BB'} &\equiv \tfrac{1}{2} \left(\nabla_{AA'} \xi_{BB'} - \nabla_{BB'}
\xi_{AA'}\right)\\
&=\nabla_{AA'} \xi_{BB'}. 
\end{align*}
Vacuum spacetimes admitting a Killing spinor such that
$\xi_{AA'}$ is real will be said to belong to the \emph{generalised
Kerr-NUT class} ---see \cite{BaeVal10a,BaeVal10b}. \emph{In the rest of this section it is assumed that
$(\mathcal{M},g_{\mu\nu})$ is a generalised Kerr-NUT spacetime.}

\medskip
As a consequence of the symmetries of $F_{AA'BB'}$,
there exists a symmetric, valence 2 spinor $\phi_{AB}$ such that
\begin{align*}
F_{AA'BB'} &= \phi_{AB} \bar{\epsilon}_{A'B'} + \bar{\phi}_{A'B'} \epsilon_{AB}, \\
\phi_{AB} &\equiv \tfrac{1}{2} F_{AQ'B}{}^{Q'}. 
\end{align*}
Using  \eqref{TypeD:kappa} one finds the following expressions for
$\xi_{AA'}$, and $\phi_{AB}$ in terms of $\psi$ and the principal
spinors:
\begin{align*}
\xi_{AA'} &= \tfrac{1}{2}\psi^{-4/3}\left(  \alpha_A \beta^Q +
  \beta_A \alpha^Q \right) \nabla_{QA'}\psi, \\
\phi_{AB} &=-\tfrac{3}{4}\Psi_{ABCD}\kappa^{CD}= -\tfrac{1}{4} \psi^{2/3} \alpha_{(A} \beta_{B)}. 
\end{align*}
The above expression for the spinor $\phi_{AB}$ is obtained using the
Killing spinor equation and by
commutation of covariant derivatives.

\medskip
For later use, we introduce the \emph{norm of the Killing form}, the
\emph{norm of the Killing vector} and the \emph{twist 1-form} via
\begin{align*}
\Phi &\equiv \phi_{PQ} \phi^{PQ}, \qquad \lambda \equiv \xi_{AA'}\xi^{AA'},\\
\omega_{AA'} &\equiv \epsilon_{AA'BB'CC'DD'} \xi^{BB'}\nabla^{CC'}\xi^{DD'}, 
\end{align*}
where 
\begin{align*}
\epsilon_{AA'BB'CC'DD'} \equiv{}& \mbox{i} \left(
  \epsilon_{AC} \epsilon_{BD} \bar\epsilon_{A'D'} \bar\epsilon_{B'C'}\right. \\
&\left. - \epsilon_{AD} \epsilon_{BC} \bar\epsilon_{A'C'} \bar\epsilon_{B'D'}
\right)
\end{align*}
is the spinorial counterpart of the completely antisymmetric volume
form, $\epsilon_{\mu\nu\lambda\rho}$, of $g_{\mu\nu}$. Locally,
$\omega_{AA'}$ is exact, so that there exists $\omega$ (the
\emph{twist potential}) such that $\omega_{AA'} = \nabla_{AA'}
\omega$. Using $\lambda$ and $\omega$ we define the \emph{Ernst
potential}, $\sigma$, by
\[
\sigma \equiv \lambda + \mbox{i} \omega.
\] 

\medskip
Using expressions \eqref{TypeD:Psi} and \eqref{TypeD:kappa} one
readily finds the following expressions for $\Phi$, $\lambda$ and
$\omega_{AA'}$:
\begin{subequations}
\begin{eqnarray}
&& \Phi= -\tfrac{1}{32} \psi^{4/3},  \label{TypeD:Phi}\\
&& \lambda= -\tfrac{1}{4} \psi^{-8/3} \nabla_{AA'}\psi
\nabla^{AA'}\psi, \label{TypeD:lambda}\\
&& \omega_{AA'} =\mbox{Im}(4\phi_{A}{}^{B}\xi_{BA'}), \label{TypeD:twist1-form} 
\end{eqnarray}
\end{subequations}
In order to obtain an expression for the Ernst potential in terms of
$\psi$, we notice the identities
\begin{subequations}
\begin{eqnarray}
&& \nabla_{AA'}(\psi^{1/3}) =-\tfrac{16}{3}\phi_{A}{}^{B}\xi_{BA'}
, \label{Concomitant1}\\
&& \nabla_{AA'}\lambda =\mbox{Re}(4\phi_{A}{}^{B}\xi_{BA'}). \label{Concomitant3}
\end{eqnarray}
\end{subequations}
These identities follow from the Bianchi identity
\eqref{SpinorialBianchiIdentity}, the Killing spinor equation and
commuting derivatives as nessesary. One concludes that
\[
\nabla_{AA'}\lambda + \mbox{i} \omega_{AA'} = -\tfrac{3}{4}\nabla_{AA'}\psi^{1/3}.
\]
The latter can be integrated to give
\begin{equation}
\label{ErnstPotential:Psi}
\sigma - c = -\tfrac{3}{4}\psi^{1/3},
\end{equation}
with $c$ a complex constant. The real part of $c$ is not arbitrary:
using equations \eqref{Concomitant1} and \eqref{Concomitant3} one
obtains that
\begin{equation}
\label{Rec}
\mbox{Re}(c)=\lambda + \tfrac{3}{4}\mbox{Re}( \psi^{1/3}).
\end{equation}

\subsection{A local characterisation of Kerr}

The analysis of the \emph{so called} Mars-Simon tensor presented in
\cite{Mar99,Mar00} gives rise to a local characterisation of the Kerr
spacetime among the class of spacetimes endowed with a Killing
vector. This characterisation involves the Weyl tensor, the Killing
form and the Ernst potential ---see Theorem~1 in \cite{Mar00}. For the
convenience of our subsequent analysis, here we present a slight
generalisation of this result in the language of spinors.

\begin{theorem}[Mars, 2000]
\label{Theorem:Mars2000}
Let $(\mathcal{M},g_{\mu\nu})$ be a smooth, vacuum spacetime admitting
a Killing vector $\xi^\mu$. Let $\mathcal{N}\subset\mathcal{M}$ be a
non-empty open subset satisfying:

\begin{itemize}
\item[(i)] There is a point $p\in \mathcal{N}$
where $\Phi\neq 0$.

\item[(ii)] The Killing form and the Weyl tensor are related by
\[
\Psi_{ABCD} = \varpi \phi_{(AB}\phi_{CD)},
\]
where $\varpi$ is a complex scalar function.
\end{itemize}

\noindent
Then there exist two complex constants $\tilde{c}$ and $k$ such that
\[
\varpi=-\frac{12}{\tilde{c}-\sigma} , \qquad \Phi =-k(\tilde{c}-\sigma)^4, \qquad \mbox{ on } \mathcal{N}.
\]
If, in addition, $\mbox{\em Re}(\tilde{c})>0$ and $k=\mbox{\em
Re}(k)>0$ then $(\mathcal{N},g_{\mu\nu})$ is locally isometric to a
portion of the Kerr spacetime. 

\end{theorem}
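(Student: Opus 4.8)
The plan is to translate hypotheses (i)--(ii) into spinorial language, use the Killing vector identities together with the Bianchi identity \eqref{SpinorialBianchiIdentity} to derive first-order equations for $\varpi$ and $\Phi$ along the orbits of $\xi_{AA'}$, integrate these against the Ernst potential $\sigma$ to obtain the asserted closed forms, and finally invoke the known tensorial characterisation of Kerr for the isometry statement.

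First I would record the standard identities for a Killing vector in a vacuum spacetime. Writing $\nabla_{AA'}\xi_{BB'}=\phi_{AB}\bar\epsilon_{A'B'}+\bar\phi_{A'B'}\epsilon_{AB}$ as in Section~\ref{Subsection:PropertiesKillingSpinors}, commutation of covariant derivatives and the vacuum condition give the Maxwell-type equation $\nabla^{A}{}_{A'}\phi_{AB}=0$ together with
\[
\nabla_{AA'}\phi_{BC}=\Psi_{ABCD}\,\xi^{D}{}_{A'}
\]
(up to a fixed numerical factor which is immaterial below). The norm $\lambda=\xi_{AA'}\xi^{AA'}$ and the locally-defined twist potential $\omega$ combine into $\sigma=\lambda+\mathrm{i}\omega$ --- defined up to an additive imaginary constant, which I will absorb into $\tilde c$ --- and satisfy $\nabla_{AA'}\sigma=a_{1}\,\phi_{AB}\,\xi^{B}{}_{A'}$ with $a_{1}$ a numerical constant.

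Next I would feed hypothesis (ii) into the identity for $\nabla_{AA'}\phi_{BC}$. Contracting with $\phi^{BC}$ and using $\phi_{A}{}^{B}\phi_{BC}=-\tfrac{1}{2}\Phi\,\epsilon_{AC}$ together with the reduction of $\phi^{BC}\phi_{(AB}\phi_{CD)}$ to a multiple of $\Phi\,\phi_{AD}$, one gets $\nabla_{AA'}\Phi=a_{2}\,\varpi\,\Phi\,\phi_{AB}\,\xi^{B}{}_{A'}$. Differentiating the relation $\Psi_{ABCD}=\varpi\,\phi_{(AB}\phi_{CD)}$, inserting the Bianchi identity \eqref{SpinorialBianchiIdentity}, the formula for $\nabla_{AA'}\phi_{BC}$, and hypothesis (ii) once more, and then projecting onto the appropriate irreducible component, yields $\nabla_{AA'}\varpi=a_{3}\,\varpi^{2}\,\phi_{AB}\,\xi^{B}{}_{A'}$, with $a_{2},a_{3}$ numerical. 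Comparing these with the equation for $\sigma$, the gradients of $\varpi^{-1}$ and of $\log\Phi$ are constant multiples of the gradient of $\sigma$; restricting to the connected component of $p$ (so that $\mathcal{N}$ may be assumed connected) and integrating gives
\[
\varpi=\frac{-12}{\tilde c-\sigma},\qquad \Phi=-k(\tilde c-\sigma)^{4},
\]
with $\tilde c,k$ complex constants, the exponent $4$ and the factor $-12$ being fixed once $a_{1},a_{2},a_{3}$ are computed. Smoothness of $\varpi$ then forces $\sigma\neq\tilde c$, hence $\Phi\neq0$, everywhere on this component, so the genuine type-D situation persists and the degenerate case $\Phi\equiv0$ need not be treated separately.

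Finally, when $\mbox{Re}(\tilde c)>0$ and $k=\mbox{Re}(k)>0$, the functions $\varpi$ and $\Phi$ obtained above are precisely those carried by the Kerr family, so the hypotheses of the tensorial local characterisation (Theorem~1 in \cite{Mar00}, of which the present statement is the spinorial version) are satisfied; equivalently, the Ernst potential $\sigma$ and Killing form $\phi_{AB}$ determined above coincide with those of a Kerr metric whose mass and angular momentum are read off from $\tilde c$ and $k$, and the vacuum field equations then force $(\mathcal{N},g_{\mu\nu})$ to be locally isometric to a portion of that Kerr spacetime. The hard part will be the spinor algebra of the preceding paragraph and the associated bookkeeping of constants: one has to track the symmetrisation identities for $\phi_{(AB}\phi_{CD)}$ and its contractions, and the precise normalisations of $\nabla_{AA'}\phi_{BC}$ and of the Ernst potential, carefully enough for the exponent $4$ and the numerical factor $-12$ to come out right; once those first-order relations are in place, the integration and the appeal to Mars's theorem are routine.
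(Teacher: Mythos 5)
The paper does not prove this theorem: it is presented explicitly as a spinorial restatement of Theorem~1 in Mars (2000), and Remark~1 following it says precisely how the two are reconciled --- by re-normalising the Killing vector and exploiting the freedom to shift the twist potential $\omega$ by an additive constant (hence $\tilde c$) so that Mars's conventions are matched. Your proposal instead re-derives the first-order differential relations that underpin Mars's result --- $\nabla\sigma$, $\nabla\log\Phi$ and $\nabla\varpi^{-1}$ all proportional to $\phi_{AB}\xi^{B}{}_{A'}$, hence to one another, and integrable to the asserted closed forms --- and then still appeals to Mars for the final isometry step. That outline is sound: it is essentially the spinorial transcription of the lemmas Mars proves with the self-dual Killing form and Weyl tensor, and the chain from the Bianchi identity plus the Killing identity $\nabla_{AA'}\phi_{BC}\propto\Psi_{ABCD}\xi^{D}{}_{A'}$, through hypothesis~(ii) and the contraction identities for $\phi_{(AB}\phi_{CD)}$, to the three proportional gradients is the right engine. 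You are also right that one must work on the connected component of $p$ for the constants of integration to be well defined.

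What your proposal replicates rather than replaces is the substance of Mars's own argument: the effort you describe as ``the hard part'' (tracking symmetrisations of $\phi_{(AB}\phi_{CD)}$ and the normalisations so the exponent $4$ and factor $-12$ emerge) is exactly what \cite{Mar00} already contains, and the isometry conclusion has no shortcut --- you rightly defer to Mars for it. The economical proof, and the one the paper intends, is therefore: (a) translate hypotheses~(i)--(ii) into the hypotheses of Mars's Theorem~1 via the solder form and the identification $F_{AA'BB'}=\phi_{AB}\bar\epsilon_{A'B'}+\bar\phi_{A'B'}\epsilon_{AB}$; (b) rescale $\xi^\mu$ and shift $\omega$ to match Mars's normalisation, which absorbs the difference in numerical constants and introduces $\tilde c$; (c) invoke Mars's conclusion verbatim. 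Your middle section, while correct in outline, is extra work; and if you do retain it, you should be more explicit that the final appeal to Mars still requires the normalisation bookkeeping of step~(b), since ``the functions $\varpi$ and $\Phi$ obtained above are precisely those carried by the Kerr family'' is only true after that matching has been carried out.
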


\medskip
\noindent
\textbf{Remark 1.} This result follows from ---and is equivalent to---
Theorem~1 in \cite{Mar00} by introducing a different normalisation in
the Killing vector and exploiting the fact that $\omega$ is defined only up to an additive constant. 
We thank M. Mars for pointing this out to us. 

\medskip
\noindent
\textbf{Remark 2.} As discussed in \cite{Mar00} it follows from the
previous result that the
Kerr spacetime is everywhere strictly of type D. In particular this
implies that 
$\psi\neq 0$.

\subsection{Killing spinor candidates}

The construction of non-Kerrness on  a bounded domain requires the
notion of a \emph{Killing spinor candidate} introduced in \cite{BaeVal11b}:

\begin{definition}
\label{Definition:KSCandidate}
Let $(\mathcal{M},g_{\mu\nu})$ be a vacuum spacetime. Consider 
a point $p\in \mathcal{M}$ for which $\mathcal{I}\neq 0$, $\mathcal{J}\neq 0$ and a symmetric spinor
$\zeta_{AB}$  satisfying at $p$
\[
\zeta_{AB}\neq 0, \;\;
 \psi^{-1}\Psi_{PQRS} \zeta^{PQ}\zeta^{RS} + \tfrac{1}{6} \zeta_{PQ}\zeta^{PQ}\neq 0.
\]
The symmetric spinor given by
\begin{equation}
\breve{\kappa}_{AB} = \psi^{-1/3} \Xi^{-1/2}
\negthickspace\left(-\psi^{-1}\Psi_{ABPQ} \zeta^{PQ}\negthinspace -\tfrac{1}{6}\zeta_{AB}
\right)\negthickspace, \label{Candidate}
\end{equation}
with
\[
 \Xi \equiv -\psi^{-1}\Psi_{PQRS} \zeta^{PQ}\zeta^{RS} - \tfrac{1}{6}
\zeta_{PQ}\zeta^{PQ}, 
\]
will be called the $\zeta_{AB}$-Killing spinor candidate at
 $p$.The scalar $\psi$ is obtained from the Weyl spinor
$\Psi_{ABCD}$ using formula \eqref{psi}. 
\end{definition}

Formula \eqref{Candidate} can be evaluated for any vacuum
spacetime $(\mathcal{M},g_{\mu\nu})$ satisfying the explicit conditions in definitin~\ref{Definition:KSCandidate}, that is it is not restricted to a special Petrov type. 
The name Killing spinor candidate is justified by the following result also proved in
\cite{BaeVal11b}:

\begin{proposition}
Let $(\mathcal{M},g_{\mu\nu})$ be a vacuum spacetime. If on
$\mathcal{N}\subset \mathcal{M}$,  the spacetime is of Petrov type D and
$\zeta_{AB}$ is a symmetric spinor satisfying
\begin{align*}
&\Xi=\psi^{-1}\Psi_{PQRS} \zeta^{PQ}\zeta^{RS} + \tfrac{1}{6} \zeta_{PQ}\zeta^{PQ}\neq 0, \\
&\zeta_{AB}\neq 0 \quad \mbox{ on } \mathcal{N},
\end{align*}
and $\mathcal{N}$ contains no branch cuts of $\psi^{1/3}$ and
$\Xi^{1/2}$, then
\begin{equation}
\kappa_{AB} = \psi^{-1/3} \Xi^{-1/2}\negthickspace \left(-\psi^{-1}\Psi_{ABPQ} \zeta^{PQ}\negthinspace -\tfrac{1}{6}\zeta_{AB}  \right) \label{KillingSpinorFormula}
\end{equation}
is a Killing spinor on $\mathcal{N}$. The formula \eqref{KillingSpinorFormula} is independent
of the choice of $\zeta_{AB}$.
\end{proposition}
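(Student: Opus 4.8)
The plan is to exploit the hypothesis that $(\mathcal{N},g_{\mu\nu})$ is of Petrov type D in order to reduce the right-hand side of \eqref{KillingSpinorFormula} to the canonical type D Killing spinor $\psi^{-1/3}\alpha_{(A}\beta_{B)}$ of \eqref{TypeD:kappa}, which solves the Killing spinor equation \eqref{KillingSpinorEquation} in any vacuum spacetime of Petrov type D (the existence part of the Walker--Penrose result \cite{WalPen70}; cf.\ \eqref{TypeD:kappa}) and carries no reference to $\zeta_{AB}$. Establishing this identification therefore settles both assertions of the proposition at once. The argument is purely pointwise and algebraic, apart from a final discussion of branches.

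First I would work at a point of $\mathcal{N}$ --- where necessarily $\psi\neq0$, so that \eqref{KillingSpinorFormula} is meaningful --- in a normalised spin dyad $\{\alpha_A,\beta_A\}$, $\alpha_Q\beta^Q=1$, adapted to the principal spinors, so that $\Psi_{ABCD}=-\psi\,\alpha_{(A}\alpha_B\beta_C\beta_{D)}$ with $\psi=18\mathcal{J}/\mathcal{I}$ as in \eqref{TypeD:Psi}--\eqref{psi}. I would then consider the pointwise linear operator $T$ on the three-dimensional space of symmetric valence-$2$ spinors given by $T(\zeta)_{AB}\equiv-\psi^{-1}\Psi_{AB}{}^{PQ}\zeta_{PQ}$. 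Expanding $\zeta_{AB}$ in the basis $\{\alpha_A\alpha_B,\,\alpha_{(A}\beta_{B)},\,\beta_A\beta_B\}$, a short dyad computation shows that $T$ is diagonal in this basis, with eigenvalue $-\tfrac{1}{3}$ on $\mathrm{span}\{\alpha_{(A}\beta_{B)}\}$ and eigenvalue $\tfrac{1}{6}$ on $\mathrm{span}\{\alpha_A\alpha_B,\,\beta_A\beta_B\}$; equivalently, one may invoke the characteristic quadratic identity satisfied by $\Psi_{ABCD}$ in type D. Consequently the numerator of \eqref{KillingSpinorFormula},
\[
N_{AB}\equiv-\psi^{-1}\Psi_{ABPQ}\zeta^{PQ}-\tfrac{1}{6}\zeta_{AB}=T(\zeta)_{AB}-\tfrac{1}{6}\zeta_{AB},
\]
is the image of $\zeta_{AB}$ under $T-\tfrac{1}{6}\,\mathrm{Id}$, which annihilates $\mathrm{span}\{\alpha_A\alpha_B,\,\beta_A\beta_B\}$ and acts as multiplication by $-\tfrac{1}{2}$ on $\mathrm{span}\{\alpha_{(A}\beta_{B)}\}$; hence $N_{AB}$ is necessarily proportional to $\alpha_{(A}\beta_{B)}$, and keeping track of the coefficient gives $N_{AB}=\nu\,\alpha_{(A}\beta_{B)}$ with $\nu\equiv\zeta^{PQ}\alpha_{(P}\beta_{Q)}$.

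Next I would observe the algebraic identity $\Xi=N_{PQ}\zeta^{PQ}$, which is immediate from the definitions of $\Xi$ and $N_{AB}$, and combine it with the previous step to get $\Xi=\nu^{2}$; thus the hypothesis $\Xi\neq0$ is precisely the statement that $\zeta_{AB}$ has a non-vanishing component $\nu$ along the distinguished line $\mathrm{span}\{\alpha_{(A}\beta_{B)}\}$. Substituting into \eqref{KillingSpinorFormula} then yields
\[
\kappa_{AB}=\psi^{-1/3}\,\Xi^{-1/2}N_{AB}=\psi^{-1/3}\,(\nu^{2})^{-1/2}\,\nu\;\alpha_{(A}\beta_{B)}=\pm\,\psi^{-1/3}\alpha_{(A}\beta_{B)},
\]
which is \eqref{TypeD:kappa} up to an overall sign. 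That sign is immaterial because \eqref{KillingSpinorEquation} is linear, and since the final expression no longer involves $\zeta_{AB}$, the formula is independent of its choice (up to that sign).

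Finally, the role of the hypothesis that $\mathcal{N}$ contains no branch cuts of $\psi^{1/3}$ and $\Xi^{1/2}$ is to promote this pointwise identity to one between smooth spinor fields: with the branch conventions fixed (as in Section~\ref{Subsection:PropertiesKillingSpinors}), $\psi^{-1/3}$ and $\Xi^{-1/2}$ are single-valued and smooth on all of $\mathcal{N}$, so $\kappa_{AB}$ is a genuine, globally defined Killing spinor field there and the overall sign is locally constant. I expect the only genuinely delicate point to be precisely this branch-and-sign bookkeeping; the algebraic core of the argument --- the eigenstructure of $T$ and the identity $\Xi=N_{PQ}\zeta^{PQ}$ --- is routine.
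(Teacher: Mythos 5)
Your proof is correct, and the underlying algebra checks out. In the normalised dyad with $\alpha_Q\beta^Q=1$ one indeed has $\Psi_{ABPQ}\alpha^P\alpha^Q=-\tfrac{\psi}{6}\alpha_A\alpha_B$, $\Psi_{ABPQ}\beta^P\beta^Q=-\tfrac{\psi}{6}\beta_A\beta_B$, and $\Psi_{ABPQ}\alpha^P\beta^Q=\tfrac{\psi}{3}\alpha_{(A}\beta_{B)}$, which gives exactly the spectrum $\{\tfrac{1}{6},\tfrac{1}{6},-\tfrac{1}{3}\}$ for $T$; hence $T-\tfrac{1}{6}\,\mathrm{Id}$ projects onto the line spanned by $\alpha_{(A}\beta_{B)}$ and your identity $N_{AB}=\nu\,\alpha_{(A}\beta_{B)}$ with $\nu=\zeta^{PQ}\alpha_{(P}\beta_{Q)}$ and $\Xi=N_{PQ}\zeta^{PQ}=\nu^2$ follows. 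Substituting then collapses \eqref{KillingSpinorFormula} to $\pm\psi^{-1/3}\alpha_{(A}\beta_{B)}$, i.e.\ the canonical type~D Killing spinor \eqref{TypeD:kappa}, which both settles the Killing property and the $\zeta$-independence (up to a locally constant complex phase).

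Note, however, that this paper does not itself prove the proposition: it cites \cite{BaeVal11b} for the proof, so there is no in-text argument to compare yours against. One small bookkeeping point: the paper is inconsistent about the overall sign of $\Xi$ --- Definition~\ref{Definition:KSCandidate} defines $\Xi=-\psi^{-1}\Psi_{PQRS}\zeta^{PQ}\zeta^{RS}-\tfrac{1}{6}\zeta_{PQ}\zeta^{PQ}$ while the proposition statement has the opposite signs. Your identity $\Xi=N_{PQ}\zeta^{PQ}$ uses the Definition's convention (the one consistent with the numerator of \eqref{KillingSpinorFormula}); with the proposition's convention one would get $\Xi=-\nu^2$ and an extra factor $\pm\mbox{i}$ in $\kappa_{AB}$, which is still harmless by linearity of \eqref{KillingSpinorEquation}. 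Worth a one-line remark, but not a gap. You might also state explicitly that the hypotheses $\mathcal{I},\mathcal{J}\neq 0$ (implicit in the type~D assumption and in forming $\psi$) guarantee $\psi\neq 0$, so that the dyad reduction \eqref{TypeD:Psi} and the final nonvanishing of $\kappa_{AB}$ are available; and that the locally constant sign is what makes the passage from the pointwise identity to a smooth field legitimate once the branch assumptions are imposed --- you gesture at both, and both are fine, but they are the two places where the argument silently uses a hypothesis.
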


\medskip
\noindent
{\bf Remark.} Different choices of branch cuts in $\psi^{1/3}$ and
$\Xi^{1/2}$ only change the right hand side of
\eqref{KillingSpinorFormula} by a constant complex phase. The
assumption on the no existence of branch cuts of $\psi^{1/3}$ and
$\Xi^{1/2}$ is included to ensure the smooth existence of derivatives
of the various fields ---see also Assumption~\ref{AssumptionXi} below.

\section{The Killing spinor initial data equations}
\label{Section:KillingSpinorInitialData}

Key for the construction of the non-Kerrness discussed in
\cite{BaeVal10a,BaeVal10b,BaeVal11b}, is the idea of how to encode
that the development of an initial data set
$(\mathcal{S},h_{ij},K_{ij})$ admits a solution to the Killing spinor
equation \eqref{KillingSpinorEquation}. This question can be addressed
by means of the space-spinor decomposition of the Killing spinor
equation \eqref{KillingSpinorEquation}. For a more detailed description see~\cite{BaeVal10b}.

\medskip
In order to perform a space-spinor decomposition of equation
\eqref{KillingSpinorEquation} it is convenient to define the spinors
\begin{align}
\xi_{ABCD} &\equiv \nabla_{(AB} \kappa_{CD)}, \quad \xi_{AB} \equiv
\tfrac{3}{2}\nabla_{(A}{}^{D}\kappa_{B)D},\nonumber\\
& \xi \equiv \nabla^{PQ}\kappa_{PQ},
\label{Definition:xi}
\end{align}
where $\nabla_{AB}$ denotes the spinorial version of the Sen connection
associated to the pair $(h_{ij},K_{ij})$ of intrinsic metric and
extrinsic curvature. It can be expressed in terms of the spinorial
counterpart, $D_{AB}$ of the Levi-Civita connection of the 3-metric
$h_{ij}$, and the spinorial version, $K_{ABCD}=K_{(AB)(CD)}=K_{CDAB}$,
of the second fundamental form $K_{ij}$. For example, given a valence
1 spinor $\pi_{A}$ one has that
\[
\nabla_{AB}\pi_C = D_{AB} \pi_C +\tfrac{1}{2} K_{ABC}{}^Q \pi_Q,
\]  
with the obvious generalisations to higher valence spinors. For expression of the commutators we refer to the paper~\cite{BaeVal10b}. The
\emph{Hermitian conjugate} of $\pi_A$ is defined via
\[
\hat{\pi}_{A} \equiv \tau_{A}{}^{E'}\bar{\pi}_{E'},
\] 
where $\tau^{AA'}$ is the normal to $\mathcal{S}$ with length $\sqrt{2}$.
The Hermitian conjugate can be extended to higher valence symmetric spinors in the
obvious way.  It can be verified that $\xi_{ABCD} \hat{\xi}^{ABCD}\geq 0$.

\medskip
Using the notation described in the previous paragraph we find that the space-spinor decomposition of equation
\eqref{KillingSpinorEquation} renders a set of 3 conditions intrinsic
to the hypersurface $\mathcal{S}$:
\begin{subequations}
\begin{align}
& \xi_{ABCD}=0,\label{kspd1}\\
& \Psi_{(ABC}{}^F\kappa_{D)F}=0, \label{kspd2}\\
& 3\kappa_{(A}{}^E\nabla_B{}^F\Psi_{CD)EF}+\Psi_{(ABC}{}^F\xi_{D)F}=0,\label{kspd3}
\end{align}
\end{subequations}
where the spinor $\Psi_{ABCD}$
denotes, in a slight abuse of notation, the restriction to the
hypersurface $\mathcal{S}$ of the self-dual Weyl spinor. For the ease
of notation, a similar
convention will be adopted for the restriction of other spacetime
fields. Whether one is considering the field on spacetime or its
restriction to $\mathcal{S}$ will always be clear from the
context. Crucially, the spinor $\Psi_{ABCD}$ in equations \eqref{kspd2}-\eqref{kspd3} can be written
entirely in terms of initial data quantities via the relations:
\[
\Psi_{ABCD} = E_{ABCD} + \mbox{i} B_{ABCD},
\]
with
\begin{align*}
E_{ABCD}={}& -r_{(ABCD)} + \tfrac{1}{2}\Omega_{(AB}{}^{PQ}\Omega_{CD)PQ}\\
&- \tfrac{1}{6}\Omega_{ABCD}K, \\
B_{ABCD}={}&-\mbox{i}\ D^Q{}_{(A}\Omega_{BCD)Q},
\end{align*}
and where $\Omega_{ABCD}\equiv K_{(ABCD)}$, $K\equiv
K_{PQ}{}^{PQ}$. Furthermore, the spinor $r_{ABCD}$ is the Ricci
tensor, $r_{ij}$, of the 3-metric $h_{ij}$.

\medskip
In Appendix \ref{RedundancySecondAlgebraicCondition} it is shown that
the second algebraic condition \eqref{kspd3} is, in fact, redundant and
a consequence of the conditions \eqref{kspd1}-\eqref{kspd2}. In
particular it follows then that 

\begin{theorem}
\label{Theorem:kspd}
Let equations \eqref{kspd1}-\eqref{kspd2} be satisfied for a symmetric
spinor $\check{\kappa}_{AB}$ on an open set $\mathcal{U}\subset
\mathcal{S}$. Then the Killing spinor equation
\eqref{KillingSpinorEquation} has a solution, $\kappa_{AB}$, on the
future domain of dependence $\mathcal{D}^+(\mathcal{U})$.
\end{theorem}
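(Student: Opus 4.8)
The plan is to reduce this statement to the known propagation theorem for the Killing spinor initial data equations. In \cite{BaeVal10b} it is established that a symmetric spinor $\check{\kappa}_{AB}$ satisfying the full system \eqref{kspd1}--\eqref{kspd3} on an open set $\mathcal{U}\subset\mathcal{S}$ gives rise to a solution of the Killing spinor equation \eqref{KillingSpinorEquation} on the future domain of dependence $\mathcal{D}^+(\mathcal{U})$. By the redundancy of \eqref{kspd3} proved in Appendix~\ref{RedundancySecondAlgebraicCondition}, the hypotheses \eqref{kspd1}--\eqref{kspd2} of the present statement already imply \eqref{kspd3}; hence $\check{\kappa}_{AB}$ satisfies the full system and the conclusion follows directly.

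For completeness I would also recall the structure of the argument behind the cited propagation result, as this is where the analytic content resides. The first step is to prolong the Killing spinor equation: setting $\xi_{AA'}\equiv\nabla^{Q}{}_{A'}\kappa_{AQ}$, equation \eqref{KillingSpinorEquation} expresses all first covariant derivatives of $\kappa_{AB}$ algebraically in terms of $\kappa_{AB}$ and $\xi_{AA'}$, while commuting covariant derivatives and using the vacuum Bianchi identity \eqref{SpinorialBianchiIdentity} expresses all first derivatives of $\xi_{AA'}$ in terms of $\kappa_{AB}$ (through $\phi_{AB}=-\tfrac{3}{4}\Psi_{ABCD}\kappa^{CD}$). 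Transvecting this closed first-order system with the unit normal $\tau^{AA'}$ of $\mathcal{S}$ produces a symmetric hyperbolic evolution system for the pair $(\kappa_{AB},\xi_{AA'})$, which one solves on $\mathcal{D}^+(\mathcal{U})$ with initial data furnished by $\check{\kappa}_{AB}$ and by the spinors $\xi_{AB}$, $\xi$ of \eqref{Definition:xi} (these being intrinsic to $\mathcal{S}$, as they involve only the Sen connection).

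It then remains to propagate the constraints. One introduces zero-quantities measuring the failure of the solution just constructed to satisfy the full prolonged system — principally $\nabla_{A'(A}\kappa_{BC)}$ and the deviation of $\nabla_{AA'}\xi_{BB'}$ from its curvature-determined form — and shows, again via the reduced equations, commutators and \eqref{SpinorialBianchiIdentity}, that these obey a homogeneous linear symmetric hyperbolic system. Their data on $\mathcal{U}$ vanish because \eqref{kspd1} annihilates the totally symmetric part of the restricted equation, the trace part is satisfied by construction of the evolution data, and \eqref{kspd2}--\eqref{kspd3} ensure consistency of the curvature-dependent part of the $\xi_{AA'}$-equation on $\mathcal{U}$. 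Uniqueness for the homogeneous system, combined with the domain-of-dependence property, then yields vanishing of the zero-quantities throughout $\mathcal{D}^+(\mathcal{U})$, so that $\kappa_{AB}$ is a genuine solution of \eqref{KillingSpinorEquation} there.

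The main obstacle I anticipate is this last step: verifying that the system obeyed by the zero-quantities genuinely closes (leaving no residual source terms) and that \eqref{kspd1}--\eqref{kspd2}, together with the redundancy of \eqref{kspd3}, suffice to annihilate every component of the zero-quantities on the initial hypersurface. This is precisely the computation carried out in \cite{BaeVal10b}; in the present setting its only new ingredient is the elimination of \eqref{kspd3} afforded by Appendix~\ref{RedundancySecondAlgebraicCondition}.
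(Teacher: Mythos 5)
Your first paragraph correctly identifies the paper's overall strategy: the result is obtained by establishing that the algebraic condition \eqref{kspd3} is a consequence of \eqref{kspd1}--\eqref{kspd2}, and then invoking the propagation theorem (Theorem~2 of \cite{BaeVal10b}) for the full system. However, the single step you defer to --- the redundancy of \eqref{kspd3} --- is precisely the content of the theorem. Appendix~\ref{RedundancySecondAlgebraicCondition} \emph{is} the paper's proof, so citing it as an established fact is question-begging: a self-contained argument still has to supply it.

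The missing argument is not a formal triviality. The paper proves the redundancy by a case analysis on the pointwise algebraic type of $\kappa_{AB}$. When $\kappa_{AB}\kappa^{AB}\neq 0$, one can write $\kappa_{AB}=e^{\varkappa}o_{(A}\iota_{B)}$, deduce from \eqref{kspd2} that $\Psi_{ABCD}=\psi\, o_{(A}o_B\iota_C\iota_{D)}$, decompose \eqref{kspd1} in the dyad, and use the spatial Bianchi constraint $\nabla^{CD}\Psi_{ABCD}=0$ to verify \eqref{kspd3} (Lemma~\ref{lemmaD}). When $\kappa_{AB}\kappa^{AB}=0$ but $\kappa_{AB}\hat\kappa^{AB}\neq 0$ a parallel computation applies with $\kappa_{AB}=o_A o_B$ (Lemma~\ref{lemmaN}). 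The trivial case $\kappa_{AB}=0$ is immediate. These three open regimes, together with a continuity argument across their boundaries, cover $\mathcal{S}$ and give \eqref{kspd3} everywhere. Only then does the cited propagation theorem from \cite{BaeVal10b} apply. Your second and third paragraphs, which reconstruct the symmetric-hyperbolic propagation machinery of \cite{BaeVal10b}, are background to a result the paper takes as given; they do not supply the dyad computations, the use of the Bianchi constraint, or the open-set/continuity argument that constitute the actual proof of Theorem~\ref{Theorem:kspd}.
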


\medskip
\noindent
\textbf{Remark.} This means that the term $I_2$ in the invariants of
\cite{BaeVal10a,BaeVal10b,BaeVal11b} can be omitted.

\section{Approximate Killing spinors}
\label{Section:ApproximateKillingSpinors}

\subsection{The approximate Killing spinor equation}
The \emph{spatial Killing spinor equation} \eqref{kspd1} can be
regarded as a (complex) generalisation of the conformal Killing vector 
equation.  As in the case of the conformal Killing equation, equation \eqref{kspd1} is
clearly overdetermined. However, one can construct a generalisation of
the equation which under suitable circumstances can always be expected
to have a solution.  One can do this by composing the operator in \eqref{kspd1} with
its formal adjoint ---see \cite{BaeVal10a}. This procedure renders the
equation
\begin{align}
\mathbf{L}\kappa_{CD} \equiv{}& \nabla^{AB} \nabla_{(AB} \kappa_{CD)}
-\Omega^{ABF}{}_{(A}\nabla_{|DF|}\kappa_{B)C}\nonumber\\
&-\Omega^{ABF}{}_{(A}\nabla_{B)F}\kappa_{CD}=0, \label{ApproximateKillingSpinorEquation} 
\end{align}
which will be called the \emph{approximate Killing spinor
  equation}. One has the following result proved in \cite{BaeVal10b}:

\begin{lemma}
The operator $\mathbf{L}$ defined by the left hand side of equation
\eqref{ApproximateKillingSpinorEquation} is a formally self-adjoint elliptic
 operator.
\end{lemma}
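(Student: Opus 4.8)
The plan is to exploit the way $\mathbf{L}$ is constructed: it is the composition $\mathbf{L}=\mathcal{O}^{\dagger}\mathcal{O}$, where $\mathcal{O}$ denotes the first-order operator $\kappa_{CD}\mapsto\nabla_{(AB}\kappa_{CD)}$ appearing on the left-hand side of the spatial Killing spinor equation \eqref{kspd1}, viewed as a map from symmetric valence-$2$ spinor fields on $\mathcal{U}$ to totally symmetric valence-$4$ spinor fields, and $\mathcal{O}^{\dagger}$ is its formal adjoint with respect to the $L^{2}$ pairing $\langle\mu,\nu\rangle=\int_{\mathcal U}\hat\mu^{A\cdots}\nu_{A\cdots}\,\mathrm{d}S$ built from the Hermitian conjugation $\widehat{\phantom{x}}$ (the positive-definite pairing for which $\xi_{ABCD}\hat\xi^{ABCD}\geq0$). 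First I would make this identification precise: integrating by parts using the divergence identity for the Sen connection $\nabla_{AB}$ and tracking the commutator of $\widehat{\phantom{x}}$ with $\nabla_{AB}$, one verifies that $\mathcal{O}^{\dagger}\mathcal{O}$ reproduces exactly the operator written in \eqref{ApproximateKillingSpinorEquation}, the principal part giving $\nabla^{AB}\nabla_{(AB}\kappa_{CD)}$ and the extrinsic-curvature terms of the Sen connection producing precisely the two $\Omega^{ABF}{}_{(A}$ contributions (this is the computation indicated in \cite{BaeVal10a,BaeVal10b}).

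Granting $\mathbf{L}=\mathcal{O}^{\dagger}\mathcal{O}$, formal self-adjointness is then immediate: for a differential operator between Hermitian spinor bundles the double formal adjoint is the operator itself, so $\mathbf{L}^{\dagger}=(\mathcal{O}^{\dagger}\mathcal{O})^{\dagger}=\mathcal{O}^{\dagger}\mathcal{O}^{\dagger\dagger}=\mathcal{O}^{\dagger}\mathcal{O}=\mathbf{L}$ — which is exactly why the equation was obtained by composing \eqref{kspd1} with its adjoint. For ellipticity I would pass to principal symbols: if $\zeta_{AB}=\zeta_{(AB)}$ is the space-spinor counterpart of a non-zero real cotangent covector, then $\sigma_{\zeta}(\mathcal{O})\colon\kappa_{CD}\mapsto\zeta_{(AB}\kappa_{CD)}$ and consequently $\sigma_{\zeta}(\mathbf{L})$ equals $\sigma_{\zeta}(\mathcal{O})^{*}\,\sigma_{\zeta}(\mathcal{O})$ up to a non-zero real factor, with $*$ the fibrewise Hermitian adjoint. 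It therefore suffices to prove that $\sigma_{\zeta}(\mathcal{O})$ is injective, for then $\sigma_{\zeta}(\mathbf{L})$ is Hermitian and positive definite, hence an isomorphism.

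To establish injectivity of the symbol I would use that $h_{ij}$ is Riemannian, so $\zeta_{AB}\zeta^{AB}\neq0$ for every non-zero real $\zeta$; hence $\zeta_{AB}$ has two linearly independent principal spinors and may be normalised to $\zeta_{AB}=c\,o_{(A}\iota_{B)}$ with $c\neq0$ in a spin dyad satisfying $o_{A}\iota^{A}=1$. Expanding $\zeta_{(AB}\kappa_{CD)}=0$ in this dyad and reading off the independent components of the resulting totally symmetric valence-$4$ spinor forces $\kappa_{00}=\kappa_{11}=\kappa_{01}=0$, i.e.\ $\kappa_{AB}=0$. This short algebraic step is the only place where the Riemannian (rather than Lorentzian) signature enters, which is precisely what makes $\mathbf{L}$ elliptic rather than hyperbolic.

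I expect the only genuine labour to be in the first step: carrying out the integration by parts carefully enough to confirm that $\mathcal{O}^{\dagger}\mathcal{O}$ is literally the operator of \eqref{ApproximateKillingSpinorEquation}, with the correct $\Omega$-terms and signs. Once the $\mathcal{O}^{\dagger}\mathcal{O}$ structure is in hand, both asserted properties follow essentially formally. A minor point to keep track of is that "formally self-adjoint" is meant relative to the Hermitian $L^{2}$ pairing defined through the Hermitian conjugate, and that the boundary terms generated by the integration by parts are discarded (the meaning of "formal"); these boundary terms are what the later Dirichlet existence argument must control.
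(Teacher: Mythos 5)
Your proof is correct and proceeds by exactly the route the paper itself indicates: the text preceding the lemma states that the operator is obtained by composing the spatial Killing spinor operator of \eqref{kspd1} with its formal adjoint (the lemma is then cited to \cite{BaeVal10b} rather than proved in the present paper), and your argument — identify $\mathbf{L}=\mathcal{O}^{\dagger}\mathcal{O}$ relative to the Hermitian $L^{2}$ pairing built from $\widehat{\phantom{x}}$, deduce formal self-adjointness formally, and reduce ellipticity to injectivity of $\kappa_{CD}\mapsto\zeta_{(AB}\kappa_{CD)}$ for real spatial $\zeta_{AB}$ using $\zeta_{AB}\zeta^{AB}\neq 0$ — is precisely that argument made explicit. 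The dyad computation showing all three components of $\kappa_{AB}$ vanish is sound, and the remark that the Riemannian signature enters only through $\zeta_{AB}\zeta^{AB}\neq 0$ is the right observation.
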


In order to discuss the solvability of equation \eqref{ApproximateKillingSpinorEquation} 
on a bounded domain, $\mathcal{U}\subset \mathcal{S}$,
one has to supplement it with appropriate boundary conditions. 
On $\partial \mathcal{U}$ we will consider the homogeneous 
Dirichlet operator $\mathbf{B}$ given by
\[
\mathbf{B}u(y) =u(y), \quad y\in \partial \mathcal{S}.
\]
The combined operator $(\mathbf{L},\mathbf{B})$ satisfies the
so-called \emph{Lopatinski-Shapiro compatibility conditions} ---see
\cite{WloRowLaw95} for detailed definitions and discussion. Thus, $(\mathbf{L},\mathbf{B})$ is L-elliptic ---see again
\cite{WloRowLaw95}, Theorem~10.7. Moreover, one has the following
theorem ---see also \cite{Nir55}. 

\begin{theorem}
\label{Fredholm:properties}
Let $\mathbf{L}$ denote a smooth second order homogeneous elliptic
operator on $\mathcal{U}$. Furthermore, let $\partial \mathcal{U}$ be
smooth and let $\mathbf{B}$ denote the Dirichlet boundary
operator. Then for $s\geq 2$ the map
\[
(\mathbf{L},\mathbf{B}): H^s(\mathcal{U}) \rightarrow H^{s-2}(\mathcal{U}) \times H^{s-1/2}(\partial \mathcal{U}) 
\]
is Fredholm. Furthermore, the boundary value problem
\begin{eqnarray*}
&& \mathbf{L} u(x) = f(x), \quad f \in H^0(\mathcal{U}), \quad x\in
\mathcal{U}, \\
&& u(y)=g(y), \quad g \in H^0(\partial\mathcal{U}), \quad y
\in \partial \mathcal{U},
\end{eqnarray*}
has a solution $u\in H^2(\mathcal{U})$ if
\[
\int_{\mathcal{U}} f\cdot \nu \mbox{d}\mu =0, 
\]
for all $\nu \in H^2(\mathcal{U})$ such that
\begin{eqnarray*}
&& \mathbf{L}^*\nu(x)=0, \quad x\in \mathcal{U}, \\
&& \nu(y) =0, \quad y\in \partial \mathcal{U}.
\end{eqnarray*}
\end{theorem}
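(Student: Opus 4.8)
\emph{Proof plan.} The plan is to obtain both assertions as consequences of the general Fredholm theory for L-elliptic boundary value problems, the hard analytic input being the a priori estimate that comes with L-ellipticity. Since $\mathbf{B}$ is the Dirichlet operator and $\mathbf{L}$ is (properly) elliptic, the pair $(\mathbf{L},\mathbf{B})$ satisfies the Lopatinski-Shapiro compatibility conditions --- this is the verification recalled just before the statement for the operator of interest, and it is classical for the Dirichlet problem in general. Hence $(\mathbf{L},\mathbf{B})$ is L-elliptic in the sense of \cite{WloRowLaw95}, and the basic elliptic estimate
\[
\|u\|_{H^s(\mathcal{U})} \le C\left( \|\mathbf{L}u\|_{H^{s-2}(\mathcal{U})} + \|\mathbf{B}u\|_{H^{s-1/2}(\partial\mathcal{U})} + \|u\|_{H^0(\mathcal{U})} \right), \qquad s\ge 2,
\]
holds for all $u\in H^s(\mathcal{U})$; one proves it in the usual way by localising, using interior elliptic estimates from the ellipticity of $\mathbf{L}$ and boundary estimates from the complementing condition, and patching with a partition of unity.

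For the Fredholm property, I would combine this estimate with the compactness of the embedding $H^s(\mathcal{U})\hookrightarrow H^0(\mathcal{U})$ (Rellich--Kondrachov, valid because $\mathcal{U}$ is bounded with smooth boundary). A standard functional-analytic argument (sometimes called Peetre's lemma) then shows that $(\mathbf{L},\mathbf{B}):H^s(\mathcal{U})\to H^{s-2}(\mathcal{U})\times H^{s-1/2}(\partial\mathcal{U})$ has finite-dimensional kernel and closed range. Applying the same reasoning to the formally adjoint problem --- which, since the formal adjoint of the Dirichlet problem is again a Dirichlet problem, is the Dirichlet problem for the elliptic operator $\mathbf{L}^*$ --- shows the cokernel is finite-dimensional as well; this is the Fredholm assertion. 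Equivalently, one can simply quote Theorem~10.7 of \cite{WloRowLaw95} (see also \cite{Nir55}), which states this directly for L-elliptic systems.

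For the solvability criterion I would use the Fredholm alternative: because the range is closed, a pair $(f,g)$ lies in it precisely when it annihilates the cokernel, and by elliptic regularity every cokernel element is represented by a smooth $\nu$ solving $\mathbf{L}^*\nu=0$ on $\mathcal{U}$ with $\nu|_{\partial\mathcal{U}}=0$. To reduce the inhomogeneous boundary value problem to this pairing I would pick an extension $G\in H^s(\mathcal{U})$ of the boundary datum $g$ (trace/extension theorem) and seek $u=v+G$ with $\mathbf{L}v=f-\mathbf{L}G$ and $v|_{\partial\mathcal{U}}=0$; Green's identity, using $\mathbf{L}^*\nu=0$ and $\nu|_{\partial\mathcal{U}}=0$, collapses the orthogonality condition to the vanishing of $\int_{\mathcal{U}} f\cdot\nu\,\mbox{d}\mu$ for every such $\nu$ (with, for non-homogeneous Dirichlet data, a boundary contribution that drops out when $g=0$), which is the stated sufficient condition; elliptic regularity then gives $v$, hence $u$, in $H^2(\mathcal{U})$.

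The only place where genuine work is needed is the L-ellipticity of $(\mathbf{L},\mathbf{B})$ and the attendant a priori estimate; once these are in hand, everything else is the abstract Fredholm alternative together with Rellich compactness. The point to be careful about is the bookkeeping of Sobolev exponents --- in particular the compatibility pairing is cleanest when the boundary data is taken in $H^{s-1/2}(\partial\mathcal{U})$ --- but this is routine and does not affect the structure of the argument.
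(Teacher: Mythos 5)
The paper does not actually prove this theorem; it quotes it from Nirenberg \cite{Nir55} and from Theorem~10.7 of \cite{WloRowLaw95}, together with the Lopatinski--Shapiro / L-ellipticity verification recalled just before the statement. Your sketch reproduces the standard argument behind those references --- a priori estimate from L-ellipticity, Rellich compactness plus Peetre's lemma for finite kernel and closed range, the adjoint Dirichlet problem for the finite-dimensional cokernel, and the Fredholm alternative with Green's identity for the solvability criterion --- so you are on the same road the paper takes by citation, and the overall structure is sound.

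There is, however, one step that you flag but do not resolve, and it matters. When you absorb the inhomogeneous Dirichlet datum by writing $u=v+G$ with $G$ an $H^s$-extension of $g$, the orthogonality condition for the reduced problem reads $\int_{\mathcal{U}}(f-\mathbf{L}G)\cdot\nu\,\mathrm{d}\mu=0$. Green's identity, using $\mathbf{L}^*\nu=0$ and $\nu|_{\partial\mathcal{U}}=0$, converts $\int_{\mathcal{U}}\mathbf{L}G\cdot\nu\,\mathrm{d}\mu$ into a boundary integral pairing $g$ against the (generally nonvanishing) conormal derivative of $\nu$; for a second-order operator and homogeneous adjoint boundary condition $\nu|_{\partial\mathcal{U}}=0$, nothing forces $\partial_n\nu$ to vanish as well --- indeed by unique continuation it cannot, unless $\nu\equiv 0$. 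The genuine compatibility condition is therefore of the form $\int_{\mathcal{U}}f\cdot\nu\,\mathrm{d}\mu+\int_{\partial\mathcal{U}}g\cdot\partial_{n}\nu\,\mathrm{d}S=0$, not merely $\int_{\mathcal{U}}f\cdot\nu\,\mathrm{d}\mu=0$. You acknowledge parenthetically that the extra boundary contribution ``drops out when $g=0$'', which is correct, but the theorem is stated for arbitrary $g$, so calling the truncated criterion ``the stated sufficient condition'' quietly presents an incomplete condition as if it were the full one. To be fair, the theorem's statement in the paper carries the same imprecision (and its use in Proposition~\ref{Existence:ApproximateKS}, where $f=0$ but $g\neq 0$, implicitly relies on the adjoint kernel causing no obstruction); still, a careful write-up should display the full compatibility pairing and then explain which term survives in the case actually needed.
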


\medskip
\noindent
\textbf{Remark 1.} In the previous Theorem, the action of
$\mathcal{B}$ on $u$ is to be understood in the trace sense ---see
\cite{WloRowLaw95}. 

\medskip
\noindent
\textbf{Remark 2.} If $\mathbf{L}$ has smooth coefficients and
$\mathbf{L}u=0$, then it follows from Weyl's Lemma ---see e.g.
\cite{WloRowLaw95}--- that if a solution to the boundary value problem
exists and the boundary data is smooth, then the solution must be, in
fact, smooth ---this is the so-called elliptic regularity.

\medskip
In what follows let $n_{AB}=n_{(AB)}$ denote the spinorial counterpart
of the inward pointing normal to $\partial \mathcal{U}$. As a
consequence of our signature conventions one has that $n_{PQ}
n^{PQ}=-1$. Theorem~\ref{Fredholm:properties} will be used to establish the
existence of solutions to the approximate Killing spinor equation
\eqref{ApproximateKillingSpinorEquation} with Dirichlet boundary data
given by the $n_{AB}$-Killing spinor candidate. In order to ensure
that the Killing spinor candidate can be constructed on $\partial
\mathcal{U}$, we define the set
\[
\mathcal{Q} \equiv \{ z \in \Complex \; | \; z=\Xi(p), \;\; p
\in \partial \mathcal{U}\},
\]
where we have chosen $\zeta_{AB}=n_{AB}$ in the function $\Xi$. We
make the following assumption:

\begin{assumption}
\label{AssumptionXi}
The initial data set $(\mathcal{S},h_{ij},K_{ij})$ and the compact
set $\mathcal{U}$ are such that $\mathcal{I}\neq 0$, $\mathcal{J}\neq 0$ on $\partial \mathcal{U}$ and that $\Xi$ is a smooth function over $\partial \mathcal{U}$ satisfying
\begin{itemize}
\item[(i)] $0\not \in \mathcal{Q}$;
\item[(ii)] $\mathcal{Q}$ does not encircle the point $z=0$.
\end{itemize}
when we choose $\zeta_{AB}$ as the inward pointing normal to $\partial \mathcal{U}$. 
\end{assumption}

\medskip
\noindent
\textbf{Remarks.} As a consequence of this assumption one can choose a cut of the square
root function on the complex plane such that $\Xi^{1/2}(p)$ is smooth
for all $p\in \partial \mathcal{U}$. Notice that the $n_{AB}$-Killing
spinor candidate is only defined at $\partial\mathcal{U}$. The
assumptions  $\mathcal{I}\neq 0$, $\mathcal{J}\neq 0$ are justified on
the basis that we are mainly interested in discussing configurations
close to Kerr initial data ---for which $\psi\neq 0$.

\medskip
One has the following result:

\begin{proposition}
\label{Existence:ApproximateKS}
Let $(\mathcal{S},h_{ij},K_{ij})$ be an initial data set for the
Einstein vacuum field equations. Furthermore, let $\mathcal{U}\subset
\mathcal{S}$ be a compact subset with boundary $\partial \mathcal{S}$
satisfying Assumption~\ref{AssumptionXi}. Then, there exists a
unique smooth solution, $\kappa_{AB}$, to the approximate Killing spinor
equation \eqref{ApproximateKillingSpinorEquation} with boundary value
given by the $n_{AB}$-Killing spinor candidate given pointwise by equation
\eqref{Candidate} on $\partial \mathcal{U}$.
\end{proposition}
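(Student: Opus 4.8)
The plan is to invoke the Fredholm theory of Theorem~\ref{Fredholm:properties} directly. The operator $\mathbf{L}$ from equation \eqref{ApproximateKillingSpinorEquation} is a smooth, second order, formally self-adjoint elliptic operator, and the pair $(\mathbf{L},\mathbf{B})$ with the Dirichlet operator $\mathbf{B}$ is L-elliptic. First I would check that the boundary datum is well defined and smooth: by Assumption~\ref{AssumptionXi} we have $\mathcal{I}\neq 0$, $\mathcal{J}\neq 0$ on $\partial\mathcal{U}$, so $\psi$ from \eqref{psi} is well defined and nonzero there; since $0\notin\mathcal{Q}$ and $\mathcal{Q}$ does not encircle $z=0$, one can choose a branch cut of the square root avoiding $\mathcal{Q}$ so that $\Xi^{1/2}$ is smooth on $\partial\mathcal{U}$, and likewise the cube root $\psi^{-1/3}$ can be chosen smooth. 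Hence the $n_{AB}$-Killing spinor candidate $\breve\kappa_{AB}$ given pointwise by \eqref{Candidate} (with $\zeta_{AB}=n_{AB}$) is a smooth section of the appropriate spinor bundle over $\partial\mathcal{U}$, and in particular lies in $H^{s-1/2}(\partial\mathcal{U})$ for every $s$.

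Next I would address existence. By Theorem~\ref{Fredholm:properties}, with $f\equiv 0$ and $g=\breve\kappa_{AB}$, a solution $\kappa_{AB}\in H^2(\mathcal{U})$ to $\mathbf{L}\kappa_{CD}=0$ with $\kappa_{AB}|_{\partial\mathcal{U}}=\breve\kappa_{AB}$ exists provided the compatibility condition $\int_{\mathcal{U}} 0\cdot\nu\,\mathrm{d}\mu=0$ holds for all $\nu$ in the kernel of the adjoint problem $(\mathbf{L}^*,\mathbf{B})$. Since $\mathbf{L}$ is formally self-adjoint, $\mathbf{L}^*=\mathbf{L}$, and in any case the compatibility integral is identically zero because $f=0$; so the obstruction vanishes trivially and a solution exists. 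Elliptic regularity (Remark~2 after Theorem~\ref{Fredholm:properties}, via Weyl's Lemma) then upgrades this to a smooth solution on $\mathcal{U}$, using that the coefficients of $\mathbf{L}$ are smooth and the boundary data $\breve\kappa_{AB}$ is smooth.

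For uniqueness I would argue that the homogeneous problem $\mathbf{L}\kappa_{AB}=0$ on $\mathcal{U}$, $\kappa_{AB}|_{\partial\mathcal{U}}=0$, has only the trivial solution. This is where I expect the only real work: one integrates by parts using the self-adjointness structure, namely that $\mathbf{L}$ arises from composing the operator in the spatial Killing spinor equation \eqref{kspd1} with its formal adjoint. Pairing $\mathbf{L}\kappa_{CD}$ against the Hermitian conjugate $\hat\kappa^{CD}$ and integrating over $\mathcal{U}$, the boundary term vanishes because $\kappa_{AB}=0$ on $\partial\mathcal{U}$, leaving $\int_{\mathcal{U}}\xi_{ABCD}\hat\xi^{ABCD}\,\mathrm{d}\mu=0$ where $\xi_{ABCD}=\nabla_{(AB}\kappa_{CD)}$ as in \eqref{Definition:xi}. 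Since $\xi_{ABCD}\hat\xi^{ABCD}\geq 0$ pointwise (as noted in Section~\ref{Section:KillingSpinorInitialData}), this forces $\xi_{ABCD}=0$ on all of $\mathcal{U}$. A solution of \eqref{kspd1} with vanishing Dirichlet data is then shown to vanish identically — either by a unique-continuation argument for the first-order operator, or more directly because \eqref{kspd1} is a first-order overdetermined system whose solutions with zero boundary value on a set with nonempty interior of the boundary must be zero. With uniqueness of the homogeneous problem in hand, the Fredholm alternative gives uniqueness of the Dirichlet problem for $\mathbf{L}$, completing the proof.
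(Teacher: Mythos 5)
Your argument for existence and regularity follows the paper's proof: invoke Theorem~\ref{Fredholm:properties} with $f=0$, observe that the compatibility integral $\int_{\mathcal{U}} f\cdot\nu\,\mathrm{d}\mu$ vanishes trivially, and upgrade the $H^2$ solution to a smooth one by elliptic regularity. Your preliminary check that Assumption~\ref{AssumptionXi} renders $\psi^{-1/3}$, $\Xi^{-1/2}$ and hence $\breve\kappa_{AB}$ well defined and smooth on $\partial\mathcal{U}$ is something the paper leaves implicit but needs.

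Where you diverge is uniqueness, and here there is a gap. The paper's proof addresses only existence and explicitly remarks that one need not verify triviality of the adjoint kernel; it does not argue uniqueness at all. You correctly recognise that the claimed uniqueness requires the homogeneous Dirichlet kernel to be trivial, and your reduction --- integrate $\int_{\mathcal{U}}(\mathbf{L}\kappa)_{CD}\hat\kappa^{CD}\,\mathrm{d}\mu$ by parts, use $\kappa_{AB}|_{\partial\mathcal{U}}=0$ to kill the boundary terms, and $\xi_{ABCD}\hat\xi^{ABCD}\geq 0$ to conclude $\nabla_{(AB}\kappa_{CD)}=0$ on $\mathcal{U}$ --- is sound. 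But your final step, that a solution of \eqref{kspd1} with zero Dirichlet datum must vanish identically, is asserted (``unique continuation,'' ``overdetermined first-order system'') rather than proved, and it is not immediate. Equation \eqref{kspd1} is of finite type, so a solution is determined by a finite jet at a single point; however, $\kappa_{AB}|_{\partial\mathcal{U}}=0$ directly forces only the tangential parts of that jet to vanish. One must still show that the normal components of the prolonged variables vanish on $\partial\mathcal{U}$, which requires exploiting vanishing on the whole boundary surface rather than at a point. That argument is the genuine content of a uniqueness proof and is missing from your proposal.
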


\begin{proof}
The proof of this result follows directly from the second part of
Theorem~\ref{Fredholm:properties}. Notice that as the equation is is
homogeneous, there is no potential obstruction to the existence of
solutions and one does not need to verify the triviality of the Kernel
of the adjoint operator as it is in the case with asymptotically
Euclidean ends ---see \cite{BaeVal10a,BaeVal10b,BaeVal11b}.
\end{proof}

\section{Reality of the Killing vector}
\label{Section:KillingVectors}
As discussed in the introduction, the existence of a Killing spinor is
not enough to single out the generalized Kerr-NUT family from the type
D solutions. We also need that the Killing vector constructed from the
Killing spinor is real. This section provides some tools to determine
that.

\subsection{Imaginary part of the Killing vector data}

In what follows, let $\kappa_{AB}$ solve the Killing spinor equation
\eqref{KillingSpinorEquation} in a spacetime domain $\mathcal{D}$, and
let $\xi$ and $\xi_{AB}$ be defined as in \eqref{Definition:xi}.  In
this section we only study what happes in the domain $\mathcal{D}$.  A
computation using the suite {\tt xAct} for {\tt Mathematica} starting
from equations \eqref{kspd1}-\eqref{kspd3} shows that
\begin{subequations}
\begin{align}
D_{AB}\mbox{Im}(\xi ^{AB}) ={}& -\tfrac{1}{2} \mbox{Im}(\xi)  K,\label{DImxi2a}\\
D_{(AB}\mbox{Im}(\xi _{CD)}) ={}& -\tfrac{1}{2} \mbox{Im}(\xi)  \Omega_{ABCD}.\label{DImxi2b}
\end{align}
\end{subequations}
This can be seen by using equations (18a) and (18b) in~\cite{BaeVal10b} and splitting into real and imaginary parts.
Equation \eqref{KillingSpinorEquation} implies
$\nabla\kappa_{AB}=-\tfrac{2}{3}\xi_{AB}$, where $\nabla$ denotes the
normal derivative $\tau^{AA'}\nabla_{AA'}$. Commuting derivatives and
simplifying one obtains
\begin{subequations}
\begin{align}
\nabla \mbox{Im}(\xi)  ={}& \mbox{Im}(\xi^{AB}) K_{AB},\label{nablaImxi2a}\\
\nabla \mbox{Im}(\xi _{AB}) ={}& -\tfrac{1}{2} \mbox{Im}(\xi)  K_{AB}
+ \tfrac{1}{3} \mbox{Im}(\xi _{AB}) K  \nonumber\\
&+ \Omega_{ABCD}\mbox{Im}(\xi^{CD})
-  D_{AB}\mbox{Im}(\xi) \nonumber\\ 
&-  \mbox{Im}(\xi_{(A}{}^{C})K_{B)C},\label{nablaImxi2b}
\end{align}
\end{subequations}
where $K_{AB}$ is the acceleration vector. For more details about the derivation see equations (32b) and (32c) in~\cite{BaeVal10b} and their derivations.
Making a space spinor split of $\xi_{AA'}=\nabla^B{}_{A'}\kappa_{AB}$
and using equation \eqref{KillingSpinorEquation}, we find
\[
\mbox{Im}(\xi_{AA'})=\tfrac{1}{2} \mbox{Im}(\xi) \tau_{AA'} -
\mbox{Im}(\xi _{AB}) \tau^{B}{}_{A'}.
\]
 After differentiating once more,
making a further space spinor split, and using equations \eqref{DImxi2a},
\eqref{DImxi2b}, \eqref{nablaImxi2a} and \eqref{nablaImxi2b} we have:
\begin{lemma}
\label{Lemma:KDTriviality}
Let $\kappa_{AB}$ solve the Killing spinor equation \eqref{KillingSpinorEquation} in a spacetime domain $\mathcal{D}$. Assume that 
\begin{align}
&\mbox{\em Im}(\xi)=0, \quad \mbox{\em Im}(\xi_{AB})=0, \quad
D_{AB}\mbox{\em Im} (\xi)=0,\nonumber\\
&D_{(A}{}^{C} \mbox{\em Im}(\xi_{B)C})=0
\end{align}
at a point $p\in \mathcal{D}$. Then $\mbox{\em Im}(\xi_{AA'})=0$ and
$\nabla_{AA'}\mbox{\em Im}(\xi_{BB'})=0$ at $p$.
\end{lemma}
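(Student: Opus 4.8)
The plan is to obtain both conclusions from the space-spinor split of $\mbox{Im}(\xi_{AA'})$ together with the evolution and constraint-type identities \eqref{DImxi2a}--\eqref{nablaImxi2b}. The first conclusion is immediate: the identity $\mbox{Im}(\xi_{AA'})=\tfrac12\mbox{Im}(\xi)\,\tau_{AA'}-\mbox{Im}(\xi_{AB})\,\tau^{B}{}_{A'}$ expresses $\mbox{Im}(\xi_{AA'})$ algebraically and linearly in $\mbox{Im}(\xi)$ and $\mbox{Im}(\xi_{AB})$, both of which vanish at $p$ by hypothesis; hence $\mbox{Im}(\xi_{AA'})=0$ at $p$.

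For the second conclusion I would apply the spacetime spinor connection $\nabla_{CC'}$ to this same identity and then perform a space-spinor split of $\nabla_{CC'}$ into its normal piece $\nabla=\tau^{PP'}\nabla_{PP'}$ and a tangential piece built out of the intrinsic connection $D_{AB}$ (together with terms algebraic in $\Omega_{ABCD}$ and the acceleration $K_{AB}$, arising from the difference between the Sen and Levi-Civita connections and from $\nabla_{CC'}\tau_{BB'}$). In the resulting expression for $\nabla_{CC'}\mbox{Im}(\xi_{BB'})$, every term carrying an undifferentiated factor $\mbox{Im}(\xi)$ or $\mbox{Im}(\xi_{AB})$ vanishes at $p$ by hypothesis, so at $p$ one is left with a linear combination of $\nabla\mbox{Im}(\xi)$, $\nabla\mbox{Im}(\xi_{AB})$, $D_{AB}\mbox{Im}(\xi)$ and $D_{AB}\mbox{Im}(\xi_{CD})$. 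It therefore suffices to show that each of these four quantities vanishes at $p$.

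Now $D_{AB}\mbox{Im}(\xi)=0$ at $p$ by hypothesis. The normal derivatives $\nabla\mbox{Im}(\xi)$ and $\nabla\mbox{Im}(\xi_{AB})$ are controlled by \eqref{nablaImxi2a}--\eqref{nablaImxi2b}: inspection of their right-hand sides shows that each term contains an explicit factor $\mbox{Im}(\xi)$ or $\mbox{Im}(\xi_{AB})$ (possibly contracted against $K_{AB}$, $\Omega_{ABCD}$ or $K$), or the factor $D_{AB}\mbox{Im}(\xi)$, all of which are zero at $p$; hence $\nabla\mbox{Im}(\xi)=0$ and $\nabla\mbox{Im}(\xi_{AB})=0$ at $p$. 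Finally, to kill $D_{AB}\mbox{Im}(\xi_{CD})$ at $p$ I would decompose it into its three irreducible parts under the symmetric pairs $AB$ and $CD$: the totally symmetric part $D_{(AB}\mbox{Im}(\xi_{CD)})$, which vanishes at $p$ by \eqref{DImxi2b} since $\mbox{Im}(\xi)=0$ there; the vector part $D_{(A}{}^{C}\mbox{Im}(\xi_{B)C})$, which vanishes at $p$ by hypothesis; and the scalar trace $D^{CD}\mbox{Im}(\xi_{CD})$, which vanishes at $p$ by \eqref{DImxi2a}, again because $\mbox{Im}(\xi)=0$. Hence $D_{AB}\mbox{Im}(\xi_{CD})=0$ at $p$, and assembling the pieces yields $\nabla_{CC'}\mbox{Im}(\xi_{BB'})=0$ at $p$, as claimed.

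The step I expect to require the most care is this last bookkeeping of irreducible components: one must check that the four hypotheses of the lemma, supplemented by \eqref{DImxi2a} and \eqref{DImxi2b} (which hold automatically once $\mbox{Im}(\xi)=0$), genuinely exhaust $D_{AB}\mbox{Im}(\xi_{CD})$, leaving no surviving component, and that the space-spinor decomposition of $\nabla_{CC'}$ produces no derivative of $\mbox{Im}(\xi)$ or $\mbox{Im}(\xi_{AB})$ beyond those already listed. Both points are routine to verify with the spinor algebra, most conveniently with the same {\tt xAct} computation already used to derive \eqref{DImxi2a}--\eqref{nablaImxi2b}.
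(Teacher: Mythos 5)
Your proposal is correct and follows essentially the same route as the paper: space-spinor split of $\xi_{AA'}$ for the algebraic conclusion, then one further derivative and split, controlled by equations \eqref{DImxi2a}--\eqref{nablaImxi2b} together with the pointwise hypotheses. The only thing you have spelled out more explicitly than the paper is the irreducible decomposition of $D_{AB}\,\mathrm{Im}(\xi_{CD})$ into its scalar, vector and totally symmetric pieces, which is exactly the bookkeeping the paper leaves implicit.
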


\section{The non-Kerrness invariant}
\label{Section:MainResult}

The approximate Killing spinor $\kappa_{AB}$ obtained in Proposition~\ref{Existence:ApproximateKS} 
will now be used, in the spirit of \cite{BaeVal10a}, to construct a geometric invariant measuring the
non-Kerrness of the initial data on the compact set
$\mathcal{U}$. More precisely, we define
\begin{align}
\label{DefinitionNonKerrness}
I \equiv{}&  \int_{\mathcal{U}} \nabla_{(AB} \kappa_{CD)}
\widehat{\nabla^{AB} \kappa^{CD}} \mbox{d}\mu \nonumber\\
&+ \int_{\mathcal{U}} \Psi_{(ABC}{}^P \kappa_{D)P}
\widehat{\Psi^{ABCQ} \kappa^D{}_Q} \mbox{d}\mu. 
\end{align}

\subsection{The main result}
The main result of our analysis is the following theorem:

\begin{theorem}\label{MainTheorem}
Let $(\mathcal{S},h_{ij},K_{ij})$ be an initial data set for the
Einstein vacuum field equations, and let $\mathcal{U}\subset
\mathcal{S}$ be a compact connected subset with boundary $\partial
\mathcal{U}$ satisfying Assumption \ref{AssumptionXi}. Let $I$ be as
defined by equation \eqref{DefinitionNonKerrness} where $\kappa_{AB}$
is given as the only solution to equation
\eqref{ApproximateKillingSpinorEquation} with boundary behaviour given
by the $n_{AB}$-Killing spinor candidate $\breve{\kappa}_{AB}$ where
$n_{AB}$ is the inward pointing normal to $\partial \mathcal{U}$. If:
\begin{itemize}
\item[(i)] $I=0$; 

\item[(ii)] there exists a point on $\mathcal{U}$ for which
\begin{align}
&\mbox{\em Im}(\xi)=0, \quad \mbox{\em Im}(\xi_{AB})=0,\nonumber\\
&D_{AB}\mbox{\em Im} (\xi)=0, 
\quad D_{(A}{}^{C} \mbox{\em Im}(\xi_{B)C})=0;
\end{align}
\end{itemize}
 then the future domain of dependence, $D^+(\mathcal{U})$, of $\mathcal{U}$ is locally isometric to a subset of
a generalised Kerr-NUT spacetime. If, in addition:
\begin{itemize}
\item[(iii)] there exists a point on $\mathcal{U}$ for which $\Phi \neq 0$;

\item[(iv)] there exists a point on $\mathcal{U}$ for which
\begin{equation}
\lambda + \tfrac{3}{4} \mbox{\em Re}(\psi^{1/3})>0,
\label{PointwiseCondition}
\end{equation}
\end{itemize}
then $D^+(\mathcal{U})$ is locally isometric to a portion of a Kerr spacetime.
Conversely, on a compact subset $\mathcal{U}\subset \mathcal{S} $ of a
Kerr initial data set, $(\mathcal{S},h_{ij},K_{ij})$, the properties (i),
(ii), (iii) and (iv) are satisfied.
\end{theorem}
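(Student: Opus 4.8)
The plan is to show that vanishing of $I$ together with the pointwise conditions (ii)--(iv) forces the initial data, and hence its development, into the hypotheses of Theorem~\ref{Theorem:Mars2000}. First I would observe that since the integrand in \eqref{DefinitionNonKerrness} is a sum of two manifestly non-negative terms (recall $\xi_{ABCD}\hat\xi^{ABCD}\geq 0$, and likewise for the algebraic term), the condition $I=0$ forces both integrands to vanish identically on $\mathcal{U}$: that is, $\nabla_{(AB}\kappa_{CD)}=0$ and $\Psi_{(ABC}{}^P\kappa_{D)P}=0$ on $\mathcal{U}$, where $\kappa_{AB}$ is the approximate Killing spinor of Proposition~\ref{Existence:ApproximateKS}. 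In other words, $\kappa_{AB}$ satisfies the spatial Killing spinor equations \eqref{kspd1} and \eqref{kspd2}. By Theorem~\ref{Theorem:kspd} (the redundancy of \eqref{kspd3}), the Killing spinor equation \eqref{KillingSpinorEquation} then has a genuine spacetime solution $\kappa_{AB}$ on the future domain of dependence $D^+(\mathcal{U})$. At this stage $D^+(\mathcal{U})$ is a vacuum spacetime admitting a Killing spinor, hence of Petrov type D, N or O; one needs to rule out N and O. This follows from the boundary condition: $\kappa_{AB}$ equals the $n_{AB}$-Killing spinor candidate on $\partial\mathcal{U}$, which under Assumption~\ref{AssumptionXi} is nonzero, and by the Proposition on Killing spinor candidates agrees with the type-D Killing spinor formula \eqref{TypeD:kappa} wherever the spacetime is type D; since $\mathcal{I},\mathcal{J}\neq 0$ near the boundary, the spacetime is type D there, forcing $\psi\neq 0$ on a neighbourhood, and type D then propagates.

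Next I would handle the reality of the Killing vector. We have the complex Killing vector $\xi_{AA'}=\nabla^Q{}_{A'}\kappa_{AQ}$ on $D^+(\mathcal{U})$. Condition (ii) gives a point where $\mathrm{Im}(\xi)$, $\mathrm{Im}(\xi_{AB})$, $D_{AB}\mathrm{Im}(\xi)$ and $D_{(A}{}^{C}\mathrm{Im}(\xi_{B)C})$ all vanish; Lemma~\ref{Lemma:KDTriviality} then yields $\mathrm{Im}(\xi_{AA'})=0$ and $\nabla_{AA'}\mathrm{Im}(\xi_{BB'})=0$ at that point. Since $\mathrm{Im}(\xi_{AA'})$ is itself a Killing vector (being the imaginary part of one), a Killing vector vanishing together with its first derivative at a point vanishes identically on a connected domain; hence $\xi_{AA'}$ is real on $D^+(\mathcal{U})$, which by definition means $D^+(\mathcal{U})$ is a generalised Kerr-NUT spacetime. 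This establishes the first conclusion.

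For the upgrade to Kerr, I would invoke Theorem~\ref{Theorem:Mars2000} applied to $D^+(\mathcal{U})$ with the now-real Killing vector $\xi^\mu$. Hypothesis (i) of that theorem (a point with $\Phi\neq0$) is exactly condition (iii) here, propagated to the development via \eqref{TypeD:Phi} and $\psi\neq 0$. Hypothesis (ii) of that theorem, $\Psi_{ABCD}=\varpi\,\phi_{(AB}\phi_{CD)}$, holds automatically in any type-D generalised Kerr-NUT spacetime because both $\Psi_{ABCD}$ and $\phi_{AB}$ are built from the same principal spinors: from \eqref{TypeD:Psi} and the displayed formula $\phi_{AB}=-\tfrac14\psi^{2/3}\alpha_{(A}\beta_{B)}$ one reads off $\varpi=-\psi/(\tfrac14\psi^{2/3})^2=-16\psi^{-1/3}$ (up to normalisation), valid wherever $\psi\neq0$. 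Theorem~\ref{Theorem:Mars2000} then produces constants $\tilde c$, $k$ with $\Phi=-k(\tilde c-\sigma)^4$ and $\varpi=-12/(\tilde c-\sigma)$; comparing with \eqref{ErnstPotential:Psi}, $\sigma-c=-\tfrac34\psi^{1/3}$, identifies $\tilde c - \sigma = c' + \tfrac34\psi^{1/3}$ up to the additive freedom in $c$, and in particular the condition $k=\mathrm{Re}(k)>0$, $\mathrm{Re}(\tilde c)>0$ translates, via \eqref{Rec}, into the pointwise inequality $\lambda+\tfrac34\mathrm{Re}(\psi^{1/3})>0$ of condition (iv). Thus (i)--(iv) imply the full hypotheses of Theorem~\ref{Theorem:Mars2000}, giving local isometry to Kerr. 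The converse direction is a direct computation: on Kerr initial data the exact Killing spinor exists, so the unique solution of the elliptic problem \eqref{ApproximateKillingSpinorEquation} with the candidate boundary data is that Killing spinor (by uniqueness in Proposition~\ref{Existence:ApproximateKS}), whence $I=0$; realness of the Killing vector and the inequalities (ii)--(iv) then follow from the explicit Kerr expressions, using Remark~2 after Theorem~\ref{Theorem:Mars2000} that Kerr is everywhere strictly type D.

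The main obstacle I anticipate is the bookkeeping around branch cuts and additive constants: matching the constants $\tilde c$, $k$ of Theorem~\ref{Theorem:Mars2000} against the $c$ of \eqref{ErnstPotential:Psi} and \eqref{Rec} must be done carefully so that condition (iv), stated purely in terms of $\lambda$ and $\mathrm{Re}(\psi^{1/3})$, is genuinely equivalent to the positivity requirements $\mathrm{Re}(\tilde c)>0$ and $k=\mathrm{Re}(k)>0$ — this is precisely where the change of normalisation flagged in Remark~1 after Theorem~\ref{Theorem:Mars2000} and Assumption~\ref{AssumptionXi}'s control on the branch of $\Xi^{1/2}$ (and hence of $\psi^{1/3}$) must be used to ensure the relevant roots are single-valued on the connected set where the argument runs. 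A secondary point requiring care is justifying that conditions stated "at a point on $\mathcal{U}$" propagate to open conditions on $D^+(\mathcal{U})$, which rests on the real-analyticity / unique-continuation features of vacuum spacetimes admitting a Killing spinor and on $\psi\neq0$ holding on a full neighbourhood rather than merely generically.
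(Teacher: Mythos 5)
Your proposal follows essentially the same line as the paper: $I=0$ forces the two non-negative integrands to vanish, giving \eqref{kspd1}--\eqref{kspd2}; Theorem~\ref{Theorem:kspd} then yields a spacetime Killing spinor on $D^+(\mathcal{U})$; condition (ii) with Lemma~\ref{Lemma:KDTriviality} plus the standard rigidity of Killing vectors makes $\xi_{AA'}$ real (generalised Kerr-NUT); and Theorem~\ref{Theorem:Mars2000}, with the constants identified through \eqref{ErnstPotential:Psi}, \eqref{TypeD:Phi} and \eqref{Rec}, gives Kerr, the converse being by uniqueness of the Dirichlet problem together with explicit Kerr formulae. The one deviation is your detour about ruling out Petrov types N and O via the boundary data: this is not needed for the generalised Kerr-NUT conclusion (which, per the paper's definition, only requires a Killing spinor with real $\xi_{AA'}$), and for the upgrade to Kerr it is really condition (iii), $\Phi\neq 0$, that excludes N and O (since $\phi_{AB}=-\tfrac34\Psi_{ABCD}\kappa^{CD}$ is forced to vanish when $\Psi$ and $\kappa$ share a repeated principal spinor), not a propagation argument from $\partial\mathcal{U}$; also note that $k=\tfrac{8}{81}$ is fixed by the normalisation, so only $\mathrm{Re}(\tilde c)>0$ needs to be verified from \eqref{Rec}.
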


\medskip
\noindent
\textbf{Remark 1.} If $D^+(\mathcal{U})$ is locally isometric to a portion of a
Kerr spacetime, the conditions {\it (ii)}, {\it (iii)} and {\it (iv)}
are satisfied on every point. Hence, the choice of which point to
check the conditions in, is not important.

\medskip
\noindent
\textbf{Remark 2.} If $\mathcal{U}$ is not connected, the conditions {\it (ii)}, {\it (iii)} and {\it (iv)} needs to be checked for each connected component of $\mathcal{U}$.

\medskip
\noindent
\textbf{Remark 3.} The conditions {\it (iii)} and {\it (iv)} can be replaced by an asymptotic flatness condition.

\begin{proof}
If $I=0$ then it follows from our smoothness assumptions that
equations \eqref{kspd1}-\eqref{kspd2} are satisfied on
$\mathcal{U}$. Hence, from Theorem~\ref{Theorem:kspd} it follows that
$D^+(\mathcal{U})$ will contain a Killing spinor $\kappa_{AB}$. Then
$\xi_{AA'}$ is the spinor counterpart of a (possibly complex) Killing
vector. Now, using assumption {\it(ii)} together with
Lemma~\ref{Lemma:KDTriviality} gives $\mbox{Im}(\xi_{AA'})=0$ and
$\nabla_{AA'}\mbox{Im}(\xi_{BB'})=0$ at a point. Using a standard
result about Killing spinors (see Appendix C.3 in \cite{Wal84}), one
concludes that $\mbox{Im}(\xi)=\mbox{Im}(\xi_{AB})=0$ everywhere on
$D^+(\mathcal{U})$ so that $\xi_{AA'}$ is, in fact, real. Thus,
$D^+(\mathcal{U})$ is locally isometric to a portion of a generalised Kerr-NUT
spacetime.

\smallskip
 As in the main text, let $\phi_{AB}$ denote the spinorial counterpart
of the Killing form for of $\xi_{AA'}$. From the discussion in
Subsection \ref{Subsection:PropertiesKillingSpinors} one concludes
that
\[
\Psi_{ABCD} =\varpi \phi_{(AB} \phi_{CD)},
\]
for some function $\varpi$. Now, if $\Phi\neq 0$ on $\mathcal{U}$,
then using Theorem~\ref{Theorem:Mars2000}, one has that 
\[
\varpi = -\frac{12}{\tilde{c}-\sigma}, \qquad \Phi = -k (\tilde{c}-\sigma)^4,
\]
for some (possibly complex) constants $\tilde{c}$ and $k$. Using
formulae \eqref{ErnstPotential:Psi} and \eqref{TypeD:Phi}, one can
identify the constants $c$ and $\tilde{c}$ and set $k=\tfrac{8}{81}$. Evaluating $c$ at the
point where \eqref{PointwiseCondition} holds one obtains that
$\mbox{Re}(c)>0$. Thus, the hypothesis of Theorem~\ref{Theorem:Mars2000} hold and one concludes that $D^+(\mathcal{U})$
is locally isometric to a portion of the Kerr spacetime.

\medskip
Now, given a compact subset $\mathcal{U}\subset \mathcal{S} $ of a
Kerr initial data set, $(\mathcal{S},h_{ij},K_{ij})$, one knows there
exist a spinor $\kappa_{AB}$ for which the spatial  Killing spinor equations
\eqref{kspd1}-\eqref{kspd2} are satisfied. This spinor coincides at
$\partial\mathcal{U}$ (up to an irrelevant constant numerical factor)
with the $n_{AB}$-Killing spinor candidate. Thus, by uniqueness of the
elliptic problem \eqref{ApproximateKillingSpinorEquation} the
approximate Killing spinor obtained from solving the equation and
$\kappa_{AB}$ coincide (again, up to an irrelevant numerical factor)
and one has $I=0$ and (i) is satisfied. As $\kappa_{AB}$ satisfies the spatial Killing
spinor equations, it follows from the general theory of
\cite{BaeVal10b} that $(\xi,\xi_{AB})$ is a Killing vector initial
data set (KID). For Kerr this data corresponds to the real stationary Killing vector, thus (ii) is satisfied. 
Now, as $\psi\neq 0$ for the Kerr spacetime, one has  from equation
\eqref{TypeD:Phi} that $\Phi \neq 0$ and thus (iii) holds. Finally,
an explicit computation with the Kerr spacetime shows that
\eqref{PointwiseCondition} holds for any point of the Kerr spacetime
---hence one obtains (iv).
\end{proof}

\section{Conclusions and discussion}
\label{Section:Conclusions}
In this paper we have devised a way to measure the deviation from Kerr
initial data for bounded domains. The main result is presented in
Theorem~\ref{MainTheorem}. In the previous papers \cite{BaeVal10a,
BaeVal10b, BaeVal11b}, a similar result was obtained for cases where
the computational domain reached spatial infinity. For such cases the
asymptotic behaviour of the approximate Killing spinor could be
specified in a way that helped us to exclude all other Petrov type D
solutions. Therefore we could conclude that the data was Kerr data if
and only if $I=0$.  As the present paper deals with bounded domains,
we constructed the boundary data for the approximate Killing spinor
from the curvature. The drawback is that this gives $I=0$ for all type
D solutions. Therefore, one requires conditions {\it (ii)}, {\it
(iii)}, {\it (iv)} in Theorem~\ref{MainTheorem} to single out the Kerr
solution. An effort was put into formulating the conditions so they can be
verified at a single arbitrarily chosen point of the computational
domain. Furthermore, we have shown that a part of the invariant
constructed in \cite{BaeVal10a, BaeVal10b, BaeVal11b} can be omitted
in the case of a bounded domain as well the unbounded case.

\medskip
The results of this paper can be used to numerically evaluate how much
any slice of a spacetime deviates from Kerr data. This gives a tool to
quantify decay towards Kerr data for a numerically evolved
spacetime. A project along these lines have been initiated.

\section*{Acknowledgments}
Part of this research was carried out at the Erwin Schr\"odinger
Institute of the University of Vienna, Austria, during the course of
the programme ``Dynamics of General Relativity: Numerical and
Analytical Approaches'' (July-September, 2011). The authors thank the
organisers for the invitation to attend this programme and the
institute for its hospitality. We have profited from interesting
discussions with Dr. M. Mars. TB is funded by the Max-Planck Institute
for Gravitational Physics, Albert Einstein Institut.

\appendix

\section{Redundancy of the second algebraic condition}
\label{RedundancySecondAlgebraicCondition}

The purpose of the present appendix is to prove the assertion made in
Theorem~\ref{Theorem:kspd} that the second algebraic condition
given by equation \eqref{kspd3} is a consequence of the conditions
\eqref{kspd1} and \eqref{kspd2}. As a consequence of this result, the
conditions required on an initial data set to have a development with a
valence 2 Killing spinor become completely analogue to those required
to have a valence 1 Killing spinor ---see e.g. \cite{BaeVal11a}.

\medskip
The analysis in this appendix proceeds by discussing the various
possible algebraic types that the spinor $\kappa_{AB}$ can have. Our
first result is the following:

\begin{lemma}\label{lemmaD}
Assume that the symmetric spinor $\kappa_{AB}$ satisfies 
\begin{align*}
&\kappa_{AB}\kappa^{AB}\neq 0, \quad \nabla_{(AB}\kappa_{CD)}=0, \\
&\Psi_{(ABC}{}^F\kappa_{D)F}=0,
\end{align*}
 on an open subset $\mathcal{U}\subset \mathcal{S}$. Then the 
 algebraic condition \eqref{kspd3} is satisfied on $\mathcal{U}$.
\end{lemma}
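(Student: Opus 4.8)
The plan is to leverage the non-degeneracy $\kappa_{AB}\kappa^{AB}\neq0$ to fix the algebraic type of the restricted Weyl spinor and then reduce \eqref{kspd3} to an identity. Since $\kappa_{AB}\kappa^{AB}\neq0$ on $\mathcal U$, the symmetric spinor $\kappa_{AB}$ has two linearly independent principal spinors, so locally one can introduce a normalised dyad $\{\alpha_A,\beta_A\}$ with $\alpha_A\beta^A=1$ and write $\kappa_{AB}=\kappa_0\,\alpha_{(A}\beta_{B)}$ for a nowhere-vanishing scalar $\kappa_0$ on $\mathcal U$. Inserting this into \eqref{kspd2} and expanding $\Psi_{(ABC}{}^{F}\kappa_{D)F}$ in the dyad, one finds that every component of $\Psi_{ABCD}$ other than the middle one must vanish; equivalently, the algebraic operator $\Phi_{ABCD}\mapsto\Phi_{(ABC}{}^{F}\kappa_{D)F}$ on totally symmetric valence four spinors has, whenever $\kappa_{AB}\kappa^{AB}\neq0$, a one-dimensional kernel spanned by $\kappa_{(AB}\kappa_{CD)}$. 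Hence there is a scalar function $\mu$ on $\mathcal U$ with
\[
\Psi_{ABCD}=\mu\,\kappa_{(AB}\kappa_{CD)},
\]
i.e.\ the restriction of the self-dual Weyl spinor to $\mathcal U$ is of type D, with $\kappa_{AB}$ aligned along its repeated principal spinors (cf.~\eqref{TypeD:Psi}--\eqref{TypeD:kappa}).

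With this structural input the remainder is a direct, if lengthy, computation. Substitute $\Psi_{ABCD}=\mu\,\kappa_{(AB}\kappa_{CD)}$ into the left-hand side of \eqref{kspd3} and expand $3\kappa_{(A}{}^{E}\nabla_{B}{}^{F}\Psi_{CD)EF}$ by Leibniz's rule; this produces terms in which $\nabla_{B}{}^{F}$ falls on one of the factors $\kappa$ and a term in which it falls on $\mu$. In the former, hypothesis \eqref{kspd1} in the form $\nabla_{(AB}\kappa_{CD)}=\xi_{ABCD}=0$ lets one drop the totally symmetric part of $\nabla_{AB}\kappa_{CD}$, so that only the pieces carried by $\xi_{AB}$ and $\xi$ survive, via the definitions \eqref{Definition:xi} and the space-spinor decomposition of $\nabla_{AB}\kappa_{CD}$ recorded in \cite{BaeVal10b}. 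For the term involving $\nabla_{B}{}^{F}\mu$ one invokes the space-spinor projection of the Bianchi identity \eqref{SpinorialBianchiIdentity}, together with the commutator relations, which for a type-D-aligned restricted Weyl spinor express $\nabla_{AB}\mu$ through $\mu$, $\kappa_{AB}$, $\xi_{AB}$ and $K_{ABCD}$. Carrying out all the contractions against $\kappa_{(A}{}^{E}$ by means of the Cayley--Hamilton-type identity relating $\kappa_{AC}\kappa^{C}{}_{B}$ to $\epsilon_{AB}\,\kappa_{PQ}\kappa^{PQ}$, the entire left-hand side collapses onto the single structure carried by the second term $\Psi_{(ABC}{}^{F}\xi_{D)F}$, with the opposite sign, so that \eqref{kspd3} holds.

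I expect the principal difficulty to be the index bookkeeping of the second paragraph rather than any conceptual obstruction: both $\nabla_{AB}\kappa_{CD}$ and $\nabla_{AB}\Psi_{CDEF}$ carry several irreducible components, and one must feed in precisely the right combination of \eqref{kspd1}, the definitions \eqref{Definition:xi}, the commutators, and the projected Bianchi identity so that no spurious totally symmetric valence four structure survives. As elsewhere in the paper, this verification is most safely performed with a computer-algebra system such as \texttt{xAct}. The remaining algebraic possibilities for $\kappa_{AB}$ --- $\kappa_{AB}$ null and $\kappa_{AB}\equiv0$ --- lie outside the scope of this lemma and are handled separately; in the latter case one has $\xi_{AB}=0$ and \eqref{kspd3} is trivially satisfied.
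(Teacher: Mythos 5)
Your proposal follows essentially the same route as the paper: since $\kappa_{AB}\kappa^{AB}\neq0$ one adapts a normalised spin dyad to $\kappa_{AB}$, the algebraic condition \eqref{kspd2} forces the restricted Weyl spinor to align with $\kappa_{(AB}\kappa_{CD)}$ (equivalently $\Psi_{ABCD}=\psi\,o_{(A}o_B\iota_C\iota_{D)}$), and then one verifies \eqref{kspd3} by expanding in dyad components, feeding in the decomposed spatial Killing spinor equation and the projected Bianchi identity $\nabla^{CD}\Psi_{ABCD}=0$. The paper presents this as an explicit dyad-component calculation ending with the observation that the residual terms are multiples of $o^Ao^Bo^C\nabla_{AB}o_C$ and $\iota^A\iota^B\iota^C\nabla_{AB}\iota_C$, which vanish by \eqref{kspd1}; your phrasing that the first term of \eqref{kspd3} ``collapses onto'' minus the second term is slightly imprecise on this last point (the cancellation is only complete once those extreme components of \eqref{kspd1} are invoked), but the plan is the same and correct.
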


\begin{proof}
The condition $\kappa_{AB}\kappa^{AB}\neq 0$ allows us to choose a
spin dyad $(o_A, \iota_A)$ and a scalar field $\varkappa$ such that
$o_A\iota^A=1$ and $\kappa_{AB}=e^\varkappa
o_{(A}\iota_{B)}$. Similarly, the condition
$\Psi_{(ABC}{}^F\kappa_{D)F}=0$ implies that there is a scalar field
$\psi$ such that $\Psi_{ABCD}=\psi o_{(A}o_B\iota_C\iota_{D)}$.

\medskip
In the next step we decompose the equation
$\nabla_{(AB}\kappa_{CD)}=0$ into its various components to obtain:
\begin{subequations}
\begin{eqnarray}
&& o^{A} o^{B} o^{C} \nabla_{AB} o_{C} = 0,\label{kappaeqs1} \\
&& o^{A} \iota^{B} o^{C} \nabla_{AB} o_{C} = -\tfrac{1}{2} o^{A} o^{B} \nabla_{AB}\varkappa,\label{kappaeqs2}\\
&& o^{A} o^{B} \iota^{C} \nabla_{AB}\iota_{C} 
- \iota^{A}\iota^{B}o^{C} \nabla_{AB} o_{C}\nonumber\\&&{}\hspace{2.5cm}= 2 o^{A} \iota^{B} \nabla_{AB}\varkappa,\label{kappaeqs3}\\
&& o^{A}\iota^{B}\iota^{C} \nabla_{AB}\iota_{C} = \tfrac{1}{2}\iota^{A} \iota^{B} \nabla_{AB}\varkappa,\label{kappaeqs4}\\
&& \iota^{A} \iota^{B} \iota^{C} \nabla_{AB} \iota_{C} = 0. \label{kappaeqs5}
\end{eqnarray}
\end{subequations}
These equations imply, in turn, that 
\begin{eqnarray}
&& e^{-\varkappa} \xi_{AB} ={}
-3 o_{A} o_{B} o^{C} \iota^{D} \iota^{F} \nabla_{CD}\iota_{F} \nonumber\\ 
&&\hspace{2cm} - 3 \iota_{A} \iota_{B}  o^{C}\iota^{D} o^{F}\nabla_{CD} o_{F}\nonumber\\ 
&& \hspace{2cm} + \tfrac{3}{2} o_{(A}\iota_{B)} (o^{C} o^{D} \iota^{F} \nabla_{CD}\iota_{F}\nonumber\\
&& \hspace{2.8cm} + \iota^{C}\iota^{D}o^{F} \nabla_{CD} o_{F})\label{xi2eq}.
\end{eqnarray}

Now, it is well known that the spacetime Bianchi identity $\nabla^Q{}_{A'}\Psi_{ABCQ}=0$ implies the constraint 
\begin{equation}\label{Bianchi3Da}
\nabla^{CD}\Psi_{ABCD}=0,
\end{equation}
on $\mathcal{S}$. Substituting $\Psi_{ABCD}=\psi
o_{(A}o_B\iota_C\iota_{D)}$ and  contracting with combinations of $o^A$ and $\iota^A$ one finds that the content of \eqref{Bianchi3Da} is given by 
\begin{subequations}
\begin{eqnarray}
&& o^A o^B\nabla_{AB}\psi ={} 
6 \psi o^A\iota^B o^C\nabla_{AB}o_C, \label{Bianchi3Db1}\\
&& o^{B} \iota^{C} \nabla_{BC}\psi ={} 
  \tfrac{3}{2}\psi \iota^{A}\iota^{B}o^{C}\nabla_{AB}o_{C}\nonumber\\ 
&&\hspace{2cm}- \tfrac{3}{2}\psi o^{A} o^{B} \iota^{C}\nabla_{AB}\iota_{C}, \label{Bianchi3Db2}\\
&& \iota^A\iota^B\nabla_{AB}\psi ={} 
-6\psi o^A\iota^B\iota^C\nabla_{AB}\iota_C. \label{Bianchi3Db3}
\end{eqnarray}
\end{subequations}
Using equation \eqref{xi2eq} and the Bianchi identities \eqref{Bianchi3Db1}-\eqref{Bianchi3Db3} we get
\begin{align}
&\Psi_{(ABC}{}^{F}\xi_{D)F} + 3 \kappa_{(A}{}^{F}\nabla_{B}{}^{H}\Psi_{CD)FH}\nonumber\\
&\hspace{1cm}=\tfrac{3}{4} e^{\varkappa}\psi \iota_{A} \iota_{B} \iota_{C} \iota_{D} 
o^{M} o^{P} o^{Q}\nabla_{PQ}o_{M} \nonumber\\
&\hspace{1.5cm}- \tfrac{3}{4} e^{\varkappa}\psi o_{A} o_{B} o_{C} o_{D} \iota^{M} \iota^{P} \iota^{Q}\nabla_{PQ}\iota_{M}. \nonumber
\end{align}
Finally using the information about the derivatives of the spin dyad
contained in equations \eqref{kappaeqs1}-\eqref{kappaeqs5} one finds
that we get that the second algebraic condition, equation
\eqref{kspd3}, is satisfied on $\mathcal{U}$. Notice that in this
argument one could have had $\psi=0$.
\end{proof}

Using similar methods as before, one obtains the following lemma:

\begin{lemma}\label{lemmaN}
Assume that the symmetric spinor $\kappa_{AB}$ satisfies 
\begin{align*}
&\kappa_{AB}\kappa^{AB}= 0, \quad \kappa_{AB}\hat\kappa^{AB}\neq 0, \\
&\nabla_{(AB}\kappa_{CD)}=0,\quad \Psi_{(ABC}{}^F\kappa_{D)F}=0, 
\end{align*}
on an open subset $\mathcal{U}\subset \mathcal{S}$. Then the algebraic
condition \eqref{kspd3} is satisfied on $\mathcal{U}$.
\end{lemma}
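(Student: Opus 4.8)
The plan is to mirror the proof of Lemma~\ref{lemmaD}, the only change being that the algebraic type of $\kappa_{AB}$, and hence of $\Psi_{ABCD}$, is now the degenerate one. Since $\kappa_{AB}\hat\kappa^{AB}\neq0$ forces $\kappa_{AB}\neq0$ while $\kappa_{AB}\kappa^{AB}=0$ forces $\kappa_{AB}$ to have a repeated principal spinor, there is a spinor $o_A$ and a smooth scalar $\varkappa$ with $\kappa_{AB}=e^{\varkappa}o_Ao_B$. The Hermitian condition guarantees that $o_A$ and $\hat o_A$ are everywhere linearly independent, so one can build a smooth normalised spin dyad $(o_A,\iota_A)$, $o_A\iota^A=1$, for instance with $\iota_A$ proportional to $\hat o_A$.

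First I would substitute $\kappa_{DF}=e^{\varkappa}o_Do_F$ into the algebraic condition $\Psi_{(ABC}{}^{F}\kappa_{D)F}=0$. This reads $e^{\varkappa}\left(\Psi_{ABC}{}^{F}o_F\right)o_{D)}$, symmetrised over $ABCD$, equal to zero; because the totally symmetrised product of a non-zero valence-1 spinor with a non-zero symmetric valence-3 spinor never vanishes, one obtains $\Psi_{ABCF}o^{F}=0$, and hence $\Psi_{ABCD}=\psi\,o_Ao_Bo_Co_D$ for some scalar $\psi$, i.e. the spacetime is of Petrov type N (or O, if $\psi=0$) on $\mathcal{U}$. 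If $\psi=0$ then $\Psi_{ABCD}$ vanishes identically and both terms of \eqref{kspd3} are trivially zero, so from now on one may assume $\psi\neq0$ (the case $\psi=0$ causing no trouble, as in the corresponding remark in Lemma~\ref{lemmaD}).

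Next I would decompose the spatial Killing spinor equation $\nabla_{(AB}\kappa_{CD)}=0$ into its scalar components in the dyad, obtaining the analogue of \eqref{kappaeqs1}--\eqref{kappaeqs5} and, from these, an explicit expression for $\xi_{AB}$ in terms of $\varkappa$ and the directional derivatives of $o_A$, paralleling \eqref{xi2eq}. In parallel I would feed $\Psi_{ABCD}=\psi\,o_Ao_Bo_Co_D$ into the spatial Bianchi constraint \eqref{Bianchi3Da} and split it into dyad components; this relates the derivatives of $\psi$ to the spin coefficients of $o_A$ and, importantly, yields $o^{A}o_Co_D\nabla^{CD}o_A=0$. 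Inserting all of this into the left hand side of \eqref{kspd3}, $3\kappa_{(A}{}^{E}\nabla_{B}{}^{F}\Psi_{CD)EF}+\Psi_{(ABC}{}^{F}\xi_{D)F}$, one finds that every surviving term is proportional to $o_Ao_Bo_Co_D$ times a single valence-1 spinor, so that the whole identity collapses to the statement that the combination $3e^{\varkappa}\,o_Fo^{E}\nabla_{D}{}^{F}o_E+o^{F}\xi_{FD}$ vanishes.

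The hard part will be this last verification: one has to show that the combination just displayed is forced to vanish by the dyad components of $\nabla_{(AB}\kappa_{CD)}=0$ together with the Bianchi constraint $\nabla^{CD}\Psi_{ABCD}=0$. As in Lemma~\ref{lemmaD}, I expect this to be a matter of systematic bookkeeping --- best carried out with the computer algebra already used in the paper --- rather than a genuine conceptual obstacle, and the argument should again hold irrespective of whether $\psi$ is zero.
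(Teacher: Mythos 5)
Your strategy---factor $\kappa_{AB}$ into a repeated principal spinor $o_A$, complete $o_A$ to a normalised dyad, decompose the spatial Killing spinor equation into dyad components, and substitute into \eqref{kspd3}---is exactly the paper's. The one point at which you deviate is in invoking the Bianchi constraint \eqref{Bianchi3Da}; there the paper's argument is cleaner, because for the type N case the Bianchi identity is not needed at all. The reason is structural: in Lemma~\ref{lemmaD} the Bianchi identity is used to trade the $\nabla\psi$ terms arising from $\kappa_{(A}{}^{E}\nabla_{B}{}^{F}\Psi_{CD)EF}$ for spin coefficients, but when $\kappa_{AB}=o_Ao_B$ the index $E$ on $\kappa_{(A}{}^{E}$ is carried by $o^{E}$, and since $\Psi_{CDEF}=\psi\,o_{(C}o_Do_Eo_{F)}$ contains only $o$'s, the contraction of $(\nabla_{B}{}^{F}\psi)\,o_Co_Do_Eo_F$ with $o^{E}$ produces the factor $o^{E}o_{E}=0$. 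All $\nabla\psi$ terms therefore drop automatically, and the paper's expansion of the left-hand side of \eqref{kspd3} collapses to a sum of two terms proportional respectively to $o^{P}o^{Q}\iota^{R}\nabla_{(PQ}o_{R)}$ and $o^{P}o^{Q}o^{R}\nabla_{(PQ}o_{R)}$, both of which vanish by \eqref{kappaeqsN2} and \eqref{kappaeqsN1}. Incidentally, the relation $o^{A}o_Co_D\nabla^{CD}o_A=0$ you extract from Bianchi is, up to index relabelling, just \eqref{kappaeqsN1}, so it is already a consequence of $\nabla_{(AB}\kappa_{CD)}=0$ and gains you nothing. Your ``hard part'' is thus easier than you anticipate, and you should still carry out that final expansion explicitly to close the argument; but modulo dropping the Bianchi detour, the route you sketch is the paper's.
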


\begin{proof}
By assumption the $\kappa_{AB}$ is algebraically special ---that is,
it has repeated principal spinors. Thus, there exists $o_A$ such that
$\kappa_{AB}=o_A o_B$. We then complete $o_A$ to a normalised spinor
dyad $(o_A,\iota_A)$. The equation $\nabla_{(AB}\kappa_{CD)}=0$ is
equivalent to
\begin{subequations}
\begin{eqnarray}
&& o^{A} o^{B} o^{C} \nabla_{(AB} o_{C)}=0,\label{kappaeqsN1}\\
&& o^{A} o^{B} \iota^{C} \nabla_{(AB}o_{C)} = 0,\label{kappaeqsN2}\\
&& o^{A} \iota^{B} \iota^{C} \nabla_{(AB}o_{C)} = 0,\label{kappaeqsN3}\\
&& \iota^{A} \iota^{B} \iota^{C} \nabla_{(AB}o_{C)} = 0.\label{kappaeqsN4}
\end{eqnarray}
\end{subequations}
These equations imply, in turn, that
\begin{align}\label{xi2eqN}
\xi_{AB} &= -2 o_{A} o_{B} \iota^{C} \nabla_{CD}o^{D} 
+ 2 o_{(A}\iota_{B)} o^{C} \nabla_{CD}o^{D}. 
\end{align}
The condition $\Psi_{(ABC}{}^F\kappa_{D)F}=0$ implies that there is a
scalar field $\psi$ such that $\Psi_{ABCD}=\psi o_{(A}o_B o_C o_{D)}$.
Using this together with \eqref{xi2eqN} yields
\begin{align}
&\Psi_{(ABC}{}^{F}\xi_{D)F} + 3 \kappa_{(A}{}^{F}\nabla_{B}{}^{H}\Psi_{CD)FH} \nonumber\\
&\hspace{1cm}=-3 o_{A} o_{B} o_{C} o_{D} \psi o^{P} o^{Q} \iota^{R} \nabla_{(PQ}o_{R)}\nonumber\\ 
& \hspace{1.5cm}+ 3 o_{(A}o_{B}o_{C}\iota_{D)} \psi o^{P} o^{Q} o^{R} \nabla_{(PQ}o_{R)}. 
\end{align}
Finally using the relations \eqref{kappaeqsN1}-\eqref{kappaeqsN4} we
get that the second algebraic condition, equation~\eqref{kspd3}, is
satisfied on $\mathcal{U}$.
\end{proof}

\medskip
With the aid of the previous two lemmas, one can provide a proof of
Theorem \ref{Theorem:kspd} in the main text.

\begin{proof}
Let $\mathcal{U}_1$ be the set of all points in $\mathcal{S}$ where
$\kappa_{AB}\kappa^{AB}\neq 0$ and $\mathcal{U}_2$ be the set of all
points in $\mathcal{S}$ where $\kappa_{AB}\hat\kappa^{AB}\neq 0$. The
scalar functions $\kappa_{AB}\kappa^{AB}: \mathcal{S}\rightarrow
\mathbb{C}$ and $\kappa_{AB}\hat\kappa^{AB}: \mathcal{S}\rightarrow
\mathbb{R}$ are continuous. Therefore, $\mathcal{U}_1$ and
$\mathcal{U}_2$ are open sets. Now, let $\mathcal{V}_1$ and
$\mathcal{V}_2$ denote, respectively, the interiors of
$\mathcal{S}\setminus \mathcal{U}_1$ and $\mathcal{V}_1\setminus
\mathcal{U}_2$. On the open set $\mathcal{V}_1 \cap \mathcal{U}_2$ we
have that $\kappa_{AB}\kappa^{AB}=0$ and
$\kappa_{AB}\hat\kappa^{AB}\neq 0$.  Hence, by Lemma~\ref{lemmaN} the
second algebraic condition, equation \eqref{kspd3}, is satisfied on
$\mathcal{V}_1 \cap \mathcal{U}_2$. Similarly, by Lemma~\ref{lemmaD}
the condition \eqref{kspd3} is satisfied on $\mathcal{U}_1$.  On the
open set $\mathcal{V}_2$ we have that $\kappa_{AB}=0$ and therefore
equation \eqref{kspd3} is trivially satisfied on
$\mathcal{V}_2$. Using the above sets, the 3-manifold $\mathcal{S}$
can be split as
\begin{equation*}
\text{int} \mathcal{S} 
\subset \mathcal{U}_1 \cup ( \mathcal{V}_1 \cap \mathcal{U}_2) \cup \mathcal{V}_2 \cup \partial \mathcal{U}_1 \cup \partial \mathcal{V}_2.
\end{equation*}
The left hand side of equation \eqref{kspd3} is continuous and vanishes 
on the open sets $\mathcal{U}_1$, $\mathcal{V}_1 \cap \mathcal{U}_2$ and $\mathcal{V}_2$. By continuity it therefore also vanishes on the boundaries $\partial \mathcal{U}_1$ and $\partial \mathcal{V}_2$.
We can therefore conclude that \eqref{kspd3} is satisfied everywhere on $\text{int} \mathcal{S}$. Again by continuity this extends to $\mathcal{S}$. Finally,
using Theorem~2 in \cite{BaeVal10b} one obtains the existence of a
valence-2 Killing spinor on $D^+(\mathcal{S})$.
\end{proof}



\begin{thebibliography}{10}

\bibitem{BaeVal10a}
T.~B\"{a}ckdahl \& J.~A. {Valiente Kroon},
\newblock {\em Geometric invariant measuring the deviation from Kerr data},
\newblock Phys. Rev. Lett. {\bf 104}, 231102 (2010).

\bibitem{BaeVal10b}
T.~B\"{a}ckdahl \& J.~A. {Valiente Kroon},
\newblock {\em On the construction of a geometric invariant measuring the
  deviation from Kerr data},
\newblock Ann. Henri Poincar\'e {\bf 11}, 1225 (2010).

\bibitem{BaeVal11a}
T.~B\"{a}ckdahl \& J.~A. {Valiente Kroon},
\newblock {\em Approximate twistors and positive mass},
\newblock Class. Quantum Grav. {\bf 28}, 075010 (2011).

\bibitem{BaeVal11b}
T.~B\"{a}ckdahl \& J.~A. {Valiente Kroon},
\newblock {\em The "non-Kerrness" of domains of outer communication of black
  holes and exteriors of stars},
\newblock Proc. Roy. Soc. Lond. A {\bf 467}, 1701 (2011).

\bibitem{Mar99}
M.~Mars,
\newblock {\em A spacetime characterization of the Kerr metric},
\newblock Class. Quantum Grav. {\bf 16}, 2507 (1999).

\bibitem{Mar00}
M.~Mars,
\newblock {\em Uniqueness properties of the Kerr metric},
\newblock Class. Quantum Grav. {\bf 17}, 3353 (2000).

\bibitem{Nir55}
L.~Nirenberg,
\newblock {\em Remarks on strongly elliptic partial differential equations},
\newblock Comm. Pure Appl. Math. {\bf VIII}, 648 (1955).

\bibitem{PenRin84}
R.~Penrose \& W.~Rindler,
\newblock {\em Spinors and space-time. {V}olume 1. {T}wo-spinor calculus and
  relativistic fields},
\newblock Cambridge University Press, 1984.

\bibitem{PenRin86}
R.~Penrose \& W.~Rindler,
\newblock {\em Spinors and space-time. {V}olume 2. {S}pinor and twistor methods
  in space-time geometry},
\newblock Cambridge University Press, 1986.

\bibitem{Wal84}
R.~M. Wald,
\newblock {\em General Relativity},
\newblock The University of Chicago Press, 1984.

\bibitem{WalPen70}
M.~Walker \& R.~Penrose,
\newblock {\em On quadratic first integrals of the geodesic equation for type
  $\{22\}$ spacetimes},
\newblock Comm. Math. Phys. {\bf 18}, 265 (1970).

\bibitem{WloRowLaw95}
J.~T. Wloka, B.~Rowley, \& B.~Lawruk,
\newblock {\em Boundary Value Problems for Elliptic Systems},
\newblock Cambridge University Press, 1995.

\end{thebibliography}

\end{document}